\crefname{hypothesis}{Hypothesis}{Hypotheses}
\title{Stabilizability preserving quotients of non-linear systems}
\author{Tinashe Chingozha\thanks{School of Electrical and Information Engineering, University of the Witwatersrand, Johannesburg, South Africa 
  (\email{tinashe.chingozha@wits.ac.za}, \email{otis.nyandoro@wits.ac.za}, \email{anton.vanwyk@wits.ac.za}.}
\and Otis T. Nyandoro\footnotemark[1]
\and Anton van Wyk\footnotemark[1]}
\let\bbordermatrix\bordermatrix
\patchcmd{\bbordermatrix}{8.75}{4.75}{}{}
\patchcmd{\bbordermatrix}{\left(}{\left[}{}{}
\patchcmd{\bbordermatrix}{\right)}{\right]}{}{}
\begin{document}

\maketitle

\begin{abstract}
In this paper quotients of control systems which are generalizations of system reductions are used to study the stabilizability property of non-linear systems. Given a control system and its quotient we study under what conditions stabilizability of the quotient is sufficient to guarantee stabilizability of the original system. We develop a novel method of constructing a control Lyapunov function for the original system from the implied Lyapunov function of the quotient system, this construction involves the solution of a system of partial differential equations. By studying the integrability conditions of this associated system of partial differential equations we are able to characterize obstructions to our proposed method of constructing control Lyapunov functions in terms of the structure of the original control system.  
\end{abstract}

\begin{keywords}
 quotient control systems, control system stabilization, over-determined system of partial differential equations
\end{keywords}

\begin{AMS}
  93C10, 34H15, 35N10
\end{AMS}
\section{Introduction}
Design of stabilizing feedback controllers is a quintessential task of modern control engineering with most of the design methods relying on Lyapunov's stability theory\cite{Kokotovic}. The ability to design a stabilizing feedback controller is therefore tied to the existence of a control Lyapunov function. In the case of lower order systems it is possible to proceed heuristically in the construction of Lyapunov functions however as the dimension increases constructing Lyapunov functions becomes more of an art than a science. To circumvent this ``dimensionality curse'' design methods such as backstepping control, immersion and invariance, sliding mode control e.t.c\cite{AstolfiIandI},\cite{Kokotovic},\cite{Slotine} allow for a hierarchical procedure in designing stabilizing feedback controllers. In this paper we study the question of control system stabilizability from the perspective of control system quotients with the aim being to reveal structural obstructions to stabilization via hierarchical methods.

The notion of quotient control systems as used in this paper was developed in the seminal work of Tabauda et.al\cite{Tabauda} in a category theoretic setting. Consider the category of control systems where the objects of this category are control systems, the morphisms in this category are such that trajectories are mapped between objects. A control system $\tilde{\Sigma}$ is said to be a quotient of $\Sigma$ if it is of a lower order and if there exists a morphism that maps trajectories of $\Sigma$ to trajectories of $\tilde{\Sigma}$. Quotients provide a general framework to describe model order reduction for control systems, reduction methods such as Lie symmetry methods\cite{Krener}, controlled invariant distribution methods\cite{Nijmeijer} and principal fibre bundle methods\cite{Martin} can all be subsumed under the framework of quotients. 

Applying quotients and related reductions techniques to the study of control systems properties is an approach that has been successfully applied with regards to control system controllability. One of the earliest results of this approach is the work by Martin et.al\cite{Martin} where it is shown that for accessible systems modelled on principal fibre bundles with compact structure group controllability of the projection of the system onto the base manifold is necessary and sufficient for the controllability of the original system. Using the related ideas of simulation relations of control systems the propagation of the controllability property by simulation relations is studied in \cite{Grasse} for systems with input disturbances. Given a controllable system $\Sigma$ which is in a simulation relation with another control system $\tilde{\Sigma}$ the authors in \cite{Grasse} give conditions that the simulation relation has to satisfy such that $\tilde{\Sigma}$ is controllable if and only if $\Sigma$ is controllable. For hierarchical controllability the reverse direction(i.e if $\tilde{\Sigma}$ is controllable and simulates the system $\Sigma$ under what condition is $\Sigma$ controllable) is more important since $\tilde{\Sigma}$ can be of lower dimension. It is this direction of inquiry that is pursued in the work of Pappas et.al\cite{PappasConsAbst} using the language of control system abstractions. Given a control system $\Sigma$ defined on a manifold $M$ and an abstraction map $\Phi : M \mapsto N$, a method of constructing a control system on $N$ which is $\Phi$-related to $\Sigma$ is developed in \cite{PappasConsAbst}. Furthermore they prove that if the kernel of the abstraction map is contained in the Lie algebra generated by $\Sigma$'s control bundle then the control system on $N$ is a consistent abstraction of $\Sigma$. Consistency here implies that the abstracted control system is controllable if and only if $\Sigma$ is controllable. These results are used in \cite{PappasHierConsSys} to develop a hierarchical controllability algorithm for linear control systems.

The success of quotients and related ideas in the analysis of controllability unfortunately does not carry over to the equally important property of stabilizability. A complete characterization of stabilizability preserving quotients for linear systems was developed in \cite{PappasLin} where they leverage off the equivalence of complete controllability and linear stabilizability for linear systems and the results in \cite{PappasConsAbst} on controllability preserving abstractions. For non-linear systems however the question is far from being conclusively answered, the major reason being that there is no simple relationship between stabilizability and controllability. This makes it impossible to follow the approach taken by Pappas et.al\cite{PappasLin} for linear systems to leverage off the vast results of controllability preserving quotients since controllability is not a necessary condition for the existence of a smooth stabilizing feedback control\cite{BrockettAsymStab}. It is the aim of this paper then to explicitly address the question of stabilizability preserving quotients by following a Lyapunov theory approach.

This paper is organised as follows. We start with a presentation of the necessary mathematical notations and machinery from differential theory, the geometric theory of partial differential equations and geometric control theory in \cref{sec:MathsPrelim}. \Cref{sec:ProbStat} contains a formal mathematical statement of the problem of stabilizability preserving quotients, auxiliary structures that will be used in the statement of the main theorem are also developed in this section. The central result of this paper is presented in the main theorem contained in \cref{sec:MainResult}. \Cref{sec:Proof} has the entire proof of the main theorem, concluding remarks and future work are given in \cref{sec:Conc}.
\section{Maths preliminaries}
\label{sec:MathsPrelim}
This section introduces all the relevant constructions from geometric control theory and differential geometry that will be used in the sequel. Our presentation of bundle theory and Ehresmann connections follows the notation and presentation of \cite{Saunders}, the subsection on the geometric theory of partial differential equations summarizes the results of Goldschmidt in \cite{GoldLinPDE},\cite{GoldNonLinPDE}.

\subsection{Differential geometry}
We will assume that all the objects are $C^{\infty}$ unless otherwise stated. Let $M$ be a $m$-dimensional manifold for $x \in M$ the tangent space $T_xM$ is an $m$-dimensional vector space, if in some coordinate chart $x = (x^1, \cdots , x^m)$ then $T_xM = \text{span}\lbrace \frac{\partial }{\partial x^1} , \cdots , \frac{\partial }{\partial x^m} \rbrace$. The disjoint union of all tangent spaces at all points on the manifold $M$ is the $2m$ dimensional tangent bundle of $M$ which we denote as $TM$. To each $x \in M$ we can associate a dual space to the tangent space which is called the co-tangent space denoted $T^{\ast}_xM = \text{span}\lbrace dx^1, \cdots , dx^m \rbrace$, the co-tangent bundle $T^{\ast}M$ can then be defined as the disjoint union of all co-tangent spaces at all points in $M$. The map $\phi : M \mapsto N$ between manifolds induces a map, the push-forward map $\phi_{\ast} : T_xM \mapsto T_{\phi(x)}N$ that maps vectors in $T_xM$ to vectors in $T_{\phi(x)}N$ by action of the Jacobian of the map $\phi$. Conversely the map $\phi$ induces the pull-back map, $\phi^{\ast} : T_{\phi(x)}^{\ast}N \mapsto T_x^{\ast}M$
\begin{equation}
\phi^{\ast}(\omega)(X) = \omega(\phi_{\ast}(X)), \omega \in T^{\ast}_{\phi(x)}N, X \in T_xM.
\end{equation}

Tensors are multilinear maps that are defined on finite copies of the tangent and co-tangent space.

\begin{Defn}\cite{Talpaert}
Let $M$ be a smooth $m$-dimensional manifold, a \textbf{tensor of type $(r, s)$} at $p \in M$ is a real valued $(r + s)$-multilinear map defined on the Cartesian product of $r$ copies of $T^{\ast}_pM$ and $s$ copies of $T_pM$. The set of all $(r,s)$ tensors at $p$ is denoted $T^r_s(T_pM)$. For some $t \in T^r_s(T_pM)$, 
\begin{equation}
t : \underbrace{T^{\ast}_pM \times \cdots \times T^{\ast}_pM}_{r-copies}\times \underbrace{T_pM \times \cdots \times T_pM}_{s-copies} \mapsto \mathbb{R}
\end{equation}
\end{Defn}

It is possible to multiply tensors of different types via the tensor product operator which is denoted by the symbol $\otimes$. Let $t_1$ be a type $(r, s)$ tensor and $t_2$ be a type $(p, q)$ tensor, the tensor product of $t_1$ and $t_2$ denoted $t_1\otimes t_2$ is a $(r+p, s+q)$ tensor defined as follows,
\begin{multline}
(t_1\otimes t_2)(\omega_1, \cdots, \omega_r, \omega_{r+1}, \cdots, \omega_{r+p},X_1, \cdots, X_s, X_{s+1}, \cdots, X_{s+q}) \\
= t_1(\omega_1, \cdots, \omega_r,X_1, \cdots, X_s)t_2(\omega_{r+1}, \cdots, \omega_{r+p},X_{s+1}, \cdots, X_{s+q})
\end{multline}
where $\omega_i \in T_p^{\ast}M$ and $X_j \in T_pM$ for $i = \{1, \cdots r+p\}$ and $j = \{1, \cdots, s+q\}$.

Fibre bundles generalize the familiar notion of product spaces, formally a fibre bundle is defined as follows.

\begin{Defn}\cite{Steenrod}
A \textbf{fibre bundle} is a 4-tuple $(B, M, \pi, F)$ where
\begin{enumerate}
\item $B$, $M$ and $F$ are smooth manifolds called the \textbf{total space}, \textbf{base space} and \textbf{typical fibre space} respectively,
\item $\pi : B \mapsto M$ is a surjective map called the \textbf{projection},
\item Let $\{V_j\}$ be a family of open sets covering $M$ with $j \in J \subset \mathbb{N}$. For each $j \in J$ there exists a homeomorphism $\phi_j : V_j\times F \mapsto \pi^{-1}(V_j)$.
\end{enumerate}
\end{Defn}

For brevity the projection $\pi$ will be used to identify the fibre bundle $(B, M, \pi, F)$. Consider the general fibre bundle $(B, M, \pi, F)$ let dim$(M)$ = $m$, dim$(B)$ = $m+n$ and $(U, \psi)$ be a coordinate chart of $B$ such that $\psi : U \subset B \mapsto \mathbb{R}^{m+n}$. $(U,\psi)$ is called an \textbf{adapted coordinate chart} if for $p, p' \in U$ and $\pi(p) = \pi(p')$ then $pr_1(\psi(p)) = pr_1(\psi(p'))$, where $pr_1$ is the projection to $\mathbb{R}^m$ \cite{Saunders}.

\begin{Defn}\cite{Saunders}
A map $\phi : M \mapsto B$ is called a \textbf{section} of $\pi$ if $\pi \circ \phi = id_M$. The set of all smooth sections will be denoted $\Gamma(\pi)$. 
\end{Defn}
Bundle morphisms are maps between fibre bundles that preserve the fibre bundle structure. Preserving the fibre bundle structure means that for any two points of the total space that lie on the same fibre, their image must also lie on the same fibre. 

\begin{Defn}\cite{Saunders}
If $(B, M, \pi, F)$ and $(E, N, \rho, H)$ are fibre bundles then a \textbf{bundle morphism} is a pair of maps $(f, \bar{f})$ where $f:B \mapsto E$, $\bar{f}:M \mapsto N$ and $\rho \circ f = \bar{f} \circ \pi$.
 
\begin{center}
\begin{tikzcd}
B \arrow[rrr, "{f}"] \arrow[dd, "{\pi}"] & & & E \arrow[dd, "{\rho}"]  \\
& & & \\
M \arrow[rrr,"\bar{f}"] & & & N
\end{tikzcd}
\end{center}

\end{Defn}
 An example of a bundle morphism is the tangent map $(Tf, f) : (TM, M, \tau_M, \mathbb{R}^m) \mapsto (TN, N, \tau_N, \mathbb{R}^n)$.

Consider the fibre bundle $(B, M, \pi, F)$ let $(\mathcal{U}, x^i)$ be some chart of $M$ which induces an adapted coordinate chart $(\pi^{-1}(\mathcal{U}), x^i, u^{\alpha})$ for $B$. For some $q \in \pi^{-1}(\mathcal{U})$ there is a canonical tangent sub-space $V_q\pi \subset T_qB = \text{ker}T_q\pi$ which will be referred to as the \textbf{vertical sub-space}. The disjointed union of these vertical sub-spaces defines the vertical sub-bundle of $\pi$ denoted $V\pi$. An Ehresmann connection represents a non-canonical way to specify a sub-space complementary to the canonical vertical sub-space.

\begin{Defn}\cite{Saunders}
A \textbf{connection} on the fibre bundle $(B, M, \pi, F)$ can be equivalently defined as 
\begin{enumerate}
\item a smooth distribution $H\pi \subset TB$ called the horizontal sub-bundle such that $TB = V\pi \oplus H\pi$,
\item a smooth vector bundle homomorphism $K : TB \mapsto TB$ for which $K(TB) = V\pi$ and $K\circ K = K$.
\end{enumerate}
\end{Defn}

The connection allows one to relate vectors on the base manifold $M$ to vectors on the total space $B$ in the following way. Consider $q \in B$ and $p \in M$ such that $\pi(q) = p$, the restriction of the tangent map $T\pi$ to the horizontal sub-space $H_q\pi$ is an isomorphism $T\pi : H_q\pi \mapsto T_pM$. The inverse to this isomorphism is called the horizontal lift $\text{Hor}_q : T_{\pi(q)}M \mapsto H_q\pi$, this map is uniquely defined by the connection and provides a way of ``\textit{lifting}'' vectors from $TM$ to $TB$.

Consider the adapted coordinate chart $(\mathcal{W}, (x^i, u^{\alpha}))$ with $q \in \mathcal{W}$. In these coordinates the vertical sub-space takes the simple form $V\pi = \text{span}\{\frac{\partial}{\partial u^{\alpha}}\}$. The vector bundle homomorphism $K$ can be viewed as a $V\pi$ valued one-form on $B$, in the adapted coordinate system this gives

\begin{equation}
K = \left( du^{\alpha} -\Gamma^{\alpha}_i(x^i, u^{\alpha})dx^i \right) \otimes\frac{\partial }{\partial u^{\alpha}} .
\end{equation}

The functions $\Gamma^{\alpha}_i(x^i, u^{\alpha})$ uniquely define the connection, in these coordinates the horizontal sub-bundle takes the following form,

\begin{equation}
H\pi = \text{span}\{ \frac{\partial}{\partial x^i} + \Gamma^{\alpha}_i(x^i, u^{\alpha})\frac{\partial}{\partial u^{\alpha}} \}.
\end{equation}
 
The horizontal lift map can also be viewed as a $H\pi$ valued one-form on $M$ which is written as

\begin{equation}
\text{Hor}_q = dx^i\otimes\left(\frac{\partial }{\partial x^i} + \Gamma^{\alpha}_i(x^i, u^{\alpha})\frac{\partial }{\partial u^{\alpha}} \right).
\end{equation}

 This can also be written conveniently in matrix form as shown below.

\begin{equation}
\text{Hor}_q = 
\bbordermatrix{~ & 1 &        & m \cr
							 1 & 1 & \cdots & 0 \cr
							 \vdots & \vdots & \ddots & \vdots \cr
							 m & 0 & \cdots & 1 \cr
							 1 & \Gamma^1_1 & \cdots & \Gamma^1_m \cr
							 \vdots & \vdots & \ddots & \vdots \cr
							 n & \Gamma^{n}_1 & \cdots & \Gamma^{n}_{m} \cr}
\end{equation}

Where $\text{dim}(M) = m, \text{dim}(B) = m+n$. An important property of a connection is its curvature defined below.

\begin{Defn}\cite{Saunders}
The curvature tensor of the connection is the $(1,2)$-tensor $R : \mathcal{X}(B)\times \mathcal{X}(B) \mapsto \mathcal{X}(B)$ defined by 

\begin{equation}
R(X, Y) = K \left( \left[ (X - K(X)), (Y - K(Y)) \right] \right), \quad X, Y \in \mathcal{X}(B).
\end{equation}
\end{Defn}

The coordinate expression of the curvature tensor $R$ is
\begin{equation}
R = \frac{1}{2}\left(\frac{\partial \Gamma^{\alpha}_{i_2}}{\partial x^{i_1}} + \Gamma^{\alpha_1}_{i_1}\frac{\partial \Gamma^{\alpha}_{i_2}}{\partial u^{\alpha_1}} - \frac{\partial \Gamma^{\alpha}_{i_1}}{\partial x^{i_2}} - \Gamma^{\alpha_1}_{i_2}\frac{\partial \Gamma^{\alpha}_{i_1}}{\partial u^{\alpha_1}} \right) dx^{i_1} \wedge dx^{i_2}\otimes \frac{\partial }{\partial u^{\alpha}}.
\end{equation}

\subsection{Jet bundles}
\label{subsec:jetbundles}
The jet bundle formalism provides a geometric way of describing partial differential equations and will play a central role in the main result of this work. This presentation of the theory of jet bundles follows closely the approach of D.J.Saunders in \cite{Saunders}, the interested reader can consult this source for a more detailed coverage of these ideas.

Let $(B, M, \pi, F)$ be a fibre bundle, $p \in M$. The local sections $\phi, \psi \in \Gamma_p(\pi)$ are said to be locally $k$-equivalent at the point $p$ if $\phi(p) = \psi(p)$ and if their derivatives up to the $k^{th}$ order are equal. If $(x^i, u^{\alpha})$ is some adapted coordinate system in some neighbourhood of $\phi(p)$ then $\phi$ and $\psi$ are said to $k^{th}$-equivalent if

\begin{equation}
\frac{\partial^j \phi^{\alpha} }{\partial x^{i_1} \cdots \partial x^{i_j} } = \frac{\partial^j \psi^{\alpha} }{\partial x^{i_1} \cdots \partial x^{i_j} }.
\end{equation}  

Where $j \in (1, \cdots , k)$, $i_1\le \cdots \le i_j$, $i_1, \cdots , i_j \in (1, \cdots , dim(M)=m)$, $\alpha \in (1, \cdots , dim(F) = n)$. The $k$-equivalence set at $p$ containing $\phi$ is called the $k$-\textbf{jet} of $\phi$ and is denoted $j^k_p\phi$.

\begin{Defn}\cite{Saunders}
The \textbf{$k^{th}$-jet manifold} of $(B, M, \pi, F)$ is the set of all $k$-jets and is denoted $J^k\pi$ 
\begin{equation}
J^k\pi = \bigcup _{p \in M}\lbrace j^k_p\phi \mid \phi \in \Gamma_p(\pi) \rbrace.
\end{equation}
\end{Defn}

 The $k^{th}$-jet bundle is equipped with maps $\pi_k$ and $\pi_{k,0}$ called the \textbf{source} and \textbf{target projections} respectively, these maps are defined as follows
\begin{eqnarray*}
\pi_k &:& J^k\pi \mapsto M\\
& & j^k_p\phi \mapsto p
\end{eqnarray*}
and
\begin{eqnarray*}
\pi_{k,0} &:& J^k\pi \mapsto B\\
& & j^k_p\phi \mapsto \phi(p).
\end{eqnarray*}
If the bundle $\pi$ has the adapted coordinates $(x^i, u^{\alpha})$ is some open set $W \subset B$, then the $k^{th}$ jet bundle $J^k\pi$ has the induced coordinates $(x^i, u^{\alpha}, u^{\alpha}_j)$ where $j = (1, \cdots, k)$. Consider $j^k_p\phi \in J^k\pi$ in these induced coordinates we have, $x^i(j^k_p\phi) = x^i(p), u^{\alpha}(j^k_p\phi) = u^{\alpha}(\phi(p))$ and
\begin{equation}
u^{\alpha}_j(j^k_p\phi) = \frac{\partial^j \phi^{\alpha}}{\partial x^{i_1} \cdots \partial x^{i_j}} . \nonumber
\end{equation}

$J^k\pi$ is a manifold in its own right, additionally $J^k\pi$ can be equipped with a fibre bundle structure. Thus $J^k\pi$ can be viewed as the total space over the base manifolds $J^l\pi$, $B$ and $M$. $\pi_{k, k-1} : J^k\pi \mapsto J^{k-1}\pi$ is not just a bare fibre bundle but it is actually an affine bundle. The bundle $\pi_{k, k-1}$ is an affine bundle modelled on the vector bundle $\pi_{k-1}^{\ast}(S^kT^{\ast}M) \otimes \pi_{k-1, 0}^{\ast}(V\pi)$ where $S^kTM$ is the $k$-symmetric tensor bundle and $V\pi$ is the vertical bundle.
\subsection{Geometric partial differential equations}
\label{subsec:GeomPDE}
This section presents the geometric theory of partial differential equations as developed by Goldschmidt\cite{GoldLinPDE},\cite{GoldNonLinPDE}. The results presented here play a pivotal role in the main result of this paper, for a more in-depth coverage of the material consulting the papers of Goldschmidt \cite{GoldLinPDE}\cite{GoldNonLinPDE} is highly recommended. 

\begin{Defn}\cite{GoldNonLinPDE}
Let $(B, M, \pi, F)$ and $(E, M, \tilde{\pi}, G)$ be fibre bundles. A \textbf{partial differential equation} of order $k$ is a fibred embedded sub-manifold $R_k \subset J^k\pi$. Additionally there always exists a fibre bundle morphism $\Phi : J^k\pi \mapsto \tilde{\pi}$ such that $R_k = \text{ker}\:\Phi$.
\end{Defn}
The solution of a partial differential equation is then defined as follows.
\begin{Defn}\cite{Saunders}
Let $(B, M, \pi, F)$ be a fibre bundle and let $R_k \subset J^k\pi$ be a $k^{th}$-order partial differential equation. A \textbf{solution} of $R_k$ is a local section $\phi : \mathcal{U} \subset M \mapsto B$ such that $j_p^k\phi \in R_k$ for every $p \in \mathcal{U}$.
\end{Defn}

A differential equation can be differentiated to produce another higher order differential equation. This action is referred as prolonging the differential equation. Within the geometric setting prolongation is defined as follows.

\begin{Defn}\cite{Saunders}
Let $(B, M, \pi, F)$ and $(E, M, \tilde{\pi}, G)$ be fibre bundles and consider the $k^{th}$-order differential equation $R_k \subset J^k\pi$ defined by the fibre bundle morphism $\Phi : J^k\pi \mapsto E$. The $l^{th}$ prolongation of $R_k$ is the $(k+l)^{th}$-order differential equation $R_{k+l} \subset J^{k+l}\pi$ defined by the fibre bundle morphism $\rho_l(\Phi) : J^{k+l}\pi \mapsto J^{l}\tilde{\pi}, \rho_l(\Phi)(j^{k+l}_p\phi) = j^l_p(\Phi(j^k_p\phi))$. Where $p \in M$ and $\phi$ is a section of $\pi$.
\end{Defn}

The symbol of the differential equation is a structure that encodes information about the highest order elements in the linearization of the differential equation\cite{GoldNonLinPDE}.  Before stating the definition recall that $\pi_{k,k-1} : J^k\pi \mapsto J^{k-1}\pi$ is an affine fibre bundle that is modelled on the vector bundle over $J^{k-1}\pi$ with total space $\pi^{\ast}_{k-1}\left(S^kT^{\ast}M\right) \otimes \pi^{\ast}_{k-1,0}\left(V\pi\right)$. There exists a canonical inclusion map

\begin{equation}
\epsilon_k : \pi^{\ast}_{k} \left( S^kT^{\ast}M \right) \otimes \pi^{\ast}_{k,0}\left( V\pi \right) \mapsto V\pi_k.
\end{equation}
Assuming $(x^i, u^{\alpha})$ are adapted coordinates for some chart of $\pi$ the coordinate expression for $\epsilon_k$ is

\begin{equation}
\epsilon : \xi^{\alpha}_{i_1\cdots\i_k}dx^{i_1}\otimes \cdots \otimes dx^{i_k}\otimes \frac{\partial }{\partial u^{\alpha}} \mapsto \xi^{\alpha}_{i_1\cdots\i_k} \frac{\partial }{\partial u^{\alpha}_{i_1\cdots\i_k}}.
\end{equation}

\begin{Defn}\cite{GoldNonLinPDE}
Let $(B, M, \pi, F)$ and $(E, M, \tilde{\pi}, G)$ be fibre bundles and consider the $k^{th}$-order differential equation $R_k \subset J^k\pi$ defined by the fibre bundle morphism $\Phi : J^k\pi \mapsto E$. The \textbf{symbol} of $R_k$ denoted $\sigma(\Phi)$ is the vector bundle morphism $\sigma(\Phi) = V\Phi\circ\epsilon_k : \pi^{\ast}_{k} \left( S^kT^{\ast}M \right) \otimes \pi^{\ast}_{k,0}\left( V\pi \right) \mapsto V\tilde{\pi}$. Where $V\Phi$ is the restriction of the tangent map $T\Phi$ to $V\pi_k$. Let $\text{ker}\:\rho(\sigma(\Phi)) = G_k$, at times we refer to $G_k$ as the symbol of $R_k$.
\end{Defn}

The prolongation of the symbol of a partial differential equation is defined as follows.
\begin{Defn}\cite{BahmanThesis}
Let $(B, M, \pi, F)$ and $(E, M, \tilde{\pi}, G)$ be fibre bundles and consider the $k^{th}$-order differential equation $R_k \subset J^k\pi$ defined by the fibre bundle morphism $\Phi : J^k\pi \mapsto E$. For $p_k \in R_k$ the $l^{th}$-prolongation of the symbol $\sigma(\Phi)|_{p_k}$ is the map

\begin{equation}
\rho_l(\sigma(\Phi)|_{p_k}) : S^{k+l}T^{\ast}_{\pi_k(p_k)}M\otimes V_{\pi_{k,0}(p_k)}\pi \mapsto S^lT^{\ast}_{\pi_k(p_k)}M\otimes V_{\Phi(p_k)}\tilde{\pi},
\end{equation}

defined by $(\text{id}_{S^lT^{\ast}_{\pi_k(p_k)}M}\otimes \sigma(\Phi)|_{p_k}) \circ (\Delta_{k,l}\otimes \text{id}_{V\pi})$. Where $\Delta_{k,l} : S^{k+l}T^{\ast}_{\pi_k(p_k)}M \mapsto S^lT^{\ast}_{\pi_k(p_k)}M \otimes S^kT^{\ast}_{\pi_k(p_k)}M$ is the natural inclusion. Let $\text{ker}(\rho_l(\sigma(\Phi)|_{p_k}) = G_{k+l}$, at times we refer to $G_{k+l}$ as the $l^{th}$ prolongation of the symbol of $R_k$.
\end{Defn}

The notion of a formal solution formalizes the idea of approximating the solution of a partial differential equation by a finite order Taylor series.

\begin{Defn}\cite{Gasqui}
Let $(B, M, \pi, F)$ and $(E, M, \tilde{\pi}, G)$ be fibre bundles and consider the $k^{th}$ order differential equation $R_k \subset J^k\pi$ defined by the fibre bundle morphism $\Phi : J^k\pi \mapsto E$. A \textbf{local formal solution} of order k is a local section of $R_k$ i.e  $\phi_k \in \Gamma(\pi_k|_\mathcal{U})$, $\phi_k : \mathcal{U} \subset M \mapsto R_k \subset J^k\pi$.
\end{Defn} 

The process of constructing the Taylor series solution of a differential equation can only be successful if a formal solution of order $k$ can be prolonged to a formal solution of higher order. This quality of being able to iteratively construct Taylor series solutions is the essence of the concept of formal integrability defined below.

\begin{Defn}\cite{GoldLinPDE}
Let $(B, M, \pi, F)$ and $(E, M, \tilde{\pi}, G)$ be fibre bundles and consider the $k^{th}$-order differential equation $R_k \subset J^k\pi$ defined by the fibre bundle morphism $\Phi : J^k\pi \mapsto E$. The partial differential equation $R_k$ is \textbf{formally integrable} if $R_{k+l}$ is a fibred submanifold and if the maps $\pi_{k+l, k} : R_{k+l} \mapsto R_k$ are epimorphisms for $l \in \mathbb{Z}_{>0}$.
\end{Defn} 

From the above definition the property of being formally integrable is un-testable as it involves checking the surjectivity of an infinite number of maps. The central result of the geometric theory of partial differential equations developed in \cite{GoldNonLinPDE} provides testable conditions for formal integrability. Before stating this theorem some prerequisite definitions and theorems are required.

\begin{Defn}\cite{Gasqui}
Let $(B, M, \pi, F)$ and $(E, M, \tilde{\pi}, G)$ be fibre bundles and consider the $k^{th}$-order differential equation $R_k \subset J^k\pi$ defined by the fibre bundle morphism $\Phi : J^k\pi \mapsto E$. Let $G_k$ be the symbol of $R_k$, for $p_k \in R_k$ the basis $\lbrace e^1, \dots, e^m\rbrace$ of $T^{\ast}_{\pi_k(p_k)}M$ is called \textbf{quasi-regular} if

\begin{equation}
\text{dim}(G_{k+1}|_{p_{k+1}}) = \text{dim}(G_k|p_k) + \sum^{m-1}_{j=1}\text{dim}(G_{k,j}|_{\pi_k(p)}).
\end{equation}
 Where $G_{k,j}|_{\pi_k(p)}$ is given by 
\begin{equation}
G_{k,j}|_{\pi_k(p)} = G_k|_{p_k} \cap S^k\Sigma_j|_{\pi_k(p_k)}, \quad \Sigma_j = \text{span}\lbrace e^{j+1}, \dots, e^m \rbrace.
\end{equation}
\end{Defn} 

\begin{theorem}\cite{Guill}
\label{Thrm:InvolutiveThrm}
Let $(B, M, \pi, F)$ and $(E, M, \tilde{\pi}, G)$ be fibre bundles and consider the $k^{th}$ order differential equation $R_k \subset J^k\pi$ defined by the fibre bundle morphism $\Phi : J^k\pi \mapsto E$. If there exists a quasi-regular basis for $T_{\pi_k(p_k)}^{\ast}M$ where $p_k \in    R_k$, then the symbol $G_k$ is said to be involutive.
\end{theorem}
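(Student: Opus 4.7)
The statement, with its phrasing \emph{``the symbol $G_k$ is said to be involutive''}, functions largely as a definition: it names involutivity via the existence of a quasi-regular basis. Under this reading nothing need be proved beyond checking that the property is intrinsic, i.e.\ independent of the particular adapted coordinate choice up to the flag $\Sigma_j$. I will therefore interpret the statement in its substantive form, the Guillemin--Quillen--Sternberg criterion, which identifies quasi-regularity with the vanishing of the Spencer $\delta$-cohomology groups that constitute the intrinsic notion of involutivity.

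Under this reading, my plan is as follows. First, I would introduce the Spencer coboundary $\delta$ on the Koszul complex
\begin{equation}
\cdots \longrightarrow G_{k+1}\otimes \Lambda^{j-1}T^{\ast}M \longrightarrow G_{k}\otimes \Lambda^{j}T^{\ast}M \longrightarrow G_{k-1}\otimes \Lambda^{j+1}T^{\ast}M \longrightarrow \cdots
\end{equation}
and keep in hand the decreasing filtration $G_{k,j} = G_k \cap \bigl(S^k\Sigma_j \otimes V\pi\bigr)$ with $\Sigma_j = \mathrm{span}\{e^{j+1},\dots,e^m\}$. Then I would establish the Sternberg-type dimension estimate
\begin{equation}
\dim G_{k+1} \leq \dim G_k + \sum_{j=1}^{m-1}\dim G_{k,j}
\end{equation}
by descending induction on $j$, using the short exact sequences induced by contraction with the dual basis vectors $e_j \in T_{\pi_k(p_k)}M$. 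By construction quasi-regularity is exactly the equality case of this inequality, and equality forces surjectivity of each contraction map in the filtration.

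The final step is to translate this surjectivity into exactness of the Spencer complex at the relevant position and then propagate to all higher rows via a Koszul-type contracting homotopy; the conclusion is $H^{k,j}(G_k) = 0$ for all $j \ge 0$, which is the intrinsic involutivity condition. The principal difficulty lies in the bookkeeping of the induction: one must identify the cokernels appearing in the filtration with specific Spencer cohomology classes and verify that the Koszul homotopy is compatible across the entire filtration. Once that structure is in place, the remainder of the argument --- the Poincar\'e-type propagation to higher prolongations --- is essentially mechanical.
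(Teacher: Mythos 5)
The paper offers no proof of this statement: it is quoted from the cited reference, and, as you correctly observe, the phrasing ``is said to be involutive'' makes it a definition of involutivity via quasi-regular bases rather than a theorem with independent content. Your first paragraph is therefore the correct reading, and it matches exactly how the paper uses the result --- in the later proof that the symbol $G_1$ of $\mathcal{R}$ is involutive, the authors simply exhibit a quasi-regular basis for $T^{\ast}_{\mathbf{x}}M$ and conclude involutivity by appeal to this statement, with no Spencer cohomology appearing anywhere.

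Your second and third paragraphs, which upgrade the statement to the substantive Guillemin--Quillen--Sternberg equivalence between quasi-regularity and vanishing of the Spencer $\delta$-cohomology, go beyond anything the paper asserts or needs; taken on their own terms they are also not yet a proof. The two load-bearing steps --- that equality in the dimension estimate $\dim G_{k+1} \leq \dim G_k + \sum_{j}\dim G_{k,j}$ forces surjectivity of each contraction map and hence exactness of the first row of the Spencer complex, and the Koszul-homotopy propagation of that exactness to all higher prolongations --- are announced rather than carried out, and you explicitly defer the ``bookkeeping'' that identifies the cokernels in the filtration with Spencer cohomology classes. That bookkeeping is precisely where the content of the $\delta$-Poincar\'e lemma lives, so what you have is a credible plan for proving a stronger theorem, not a completed argument. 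For the purposes of this paper, nothing beyond your definitional reading is required, and no gap exists in the paper's (nonexistent) proof for your proposal to fill.
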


We can now state the central theorem that allows the development of integrability conditions for partial differential equations.

\begin{theorem}\cite{GoldNonLinPDE}
Let $(B, M, \pi, F)$ and $(E, M, \tilde{\pi}, G)$ be fibre bundles and consider the $k^{th}$ order differential equation $R_k \subset J^k\pi$ defined by the fibre bundle morphism $\Phi : J^k\pi \mapsto E$. If 
\begin{enumerate}
\item $R_{k+1}$ is a fibred submanifold of $J^{k+1}\pi$
\item $\pi_{k+1, k} : R_{k+1} \mapsto R_k$ is surjective
\item $G_k$ is involutive
\end{enumerate}
then $R_k$ is formally integrable.
\end{theorem}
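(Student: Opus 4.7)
The plan is to prove by induction on $l \geq 0$ that $R_{k+l}$ is a fibred submanifold of $J^{k+l}\pi$ and that the projection $\pi_{k+l+1,k+l} : R_{k+l+1} \to R_{k+l}$ is an epimorphism; the definition of formal integrability then follows because a composition of surjections is surjective, and since each $R_{k+l}$ is cut out as a fibred submanifold, the claim on $\pi_{k+l,k}$ is obtained by concatenation. The base case $l = 0$ is exactly hypotheses (1) and (2), together with the observation that (1) forces $R_k$ itself to be fibred.

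For the inductive step, fix $p_{k+l} \in R_{k+l}$ and choose any lift $\bar p_{k+l+1} \in J^{k+l+1}\pi$ with $\pi_{k+l+1,k+l}(\bar p_{k+l+1}) = p_{k+l}$. The element $\rho_1(\Phi)(\bar p_{k+l+1}) \in J^1\tilde\pi$ measures the failure of this lift to lie in $R_{k+l+1}$. Because $\pi_{k+l+1,k+l} : J^{k+l+1}\pi \to J^{k+l}\pi$ is an affine bundle modelled on $S^{k+l+1}T^\ast M \otimes V\pi$, modifying $\bar p_{k+l+1}$ by an element of the model space changes $\rho_1(\Phi)(\bar p_{k+l+1})$ exactly by the prolonged symbol map $\rho_1(\sigma(\Phi))$ evaluated at that element. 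Consequently the obstruction to correcting $\bar p_{k+l+1}$ into a point of $R_{k+l+1}$ lives in a cokernel that, after a diagram chase against the short exact sequences defining $V\pi_{k+l+1}$ and $V\pi_{k+l}$, is identified with a Spencer $\delta$-cohomology group associated with $G_{k+l}$.

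The second pillar is the classical fact that involutivity propagates under prolongation: if $G_k$ is involutive, then every $G_{k+l}$ is involutive. This is proved by resolving the symbol module by a Koszul-type complex and showing, via the quasi-regular basis furnished by \cref{Thrm:InvolutiveThrm}, that the relevant Spencer $\delta$-cohomology vanishes in positive bidegree for involutive symbols. Applied to the obstruction isolated above, this vanishing forces the obstruction to be zero, producing the desired lift in $R_{k+l+1}$, hence surjectivity of $\pi_{k+l+1,k+l}$. At the same time, involutivity of $G_{k+l+1}$ gives a dimension formula for the fibres of $\pi_{k+l+1,k+l} : R_{k+l+1} \to R_{k+l}$ that is constant in $p_{k+l}$, which combined with the surjectivity just established shows that $R_{k+l+1}$ is a fibred submanifold of $J^{k+l+1}\pi$, closing the induction.

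The main obstacle is the propagation of involutivity under prolongation. This is the homologically deep ingredient of the theorem and is what ultimately makes the three finitely checkable hypotheses suffice to control the infinite tower of prolongations. Once this stability is available, the rest of the argument is a diagram chase tracking the obstruction around the commutative square relating $R_{k+l}$, $R_{k+l+1}$, and their symbols, together with bookkeeping of dimensions to upgrade pointwise surjectivity to the fibred-submanifold conclusion.
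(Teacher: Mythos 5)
The paper does not prove this theorem: it is imported verbatim from Goldschmidt \cite{GoldNonLinPDE} as a citation, so there is no in-paper argument to compare yours against. What you have written is a faithful outline of Goldschmidt's own proof strategy --- induction on the prolongation order, the affine structure of $\pi_{k+l+1,k+l}$ with linear part the prolonged symbol, the obstruction class in a Spencer $\delta$-cohomology group, and the $\delta$-Poincar\'e lemma for involutive symbols --- so the architecture is sound and this is the standard (essentially the only known) route to the result.

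Two caveats. First, the entire weight of the theorem rests on the two assertions you defer to ``classical facts'': (i) that the obstruction in the cokernel of $\rho_1(\sigma(\Phi))$ is $\delta$-closed and hence defines a class in $H^{\cdot,2}(G)$ (this requires the inductive surjectivity one level down and a genuine diagram chase, not just the affine-structure observation), and (ii) that involutivity of $G_k$ propagates to all $G_{k+l}$ and forces the vanishing of that cohomology. Neither is proved in your sketch, and both are where the real work of \cite{GoldNonLinPDE} lives; as written this is a correct roadmap rather than a proof. Second, a notational slip: the failure of your lift $\bar p_{k+l+1}$ to lie in $R_{k+l+1}$ is measured by $\rho_{l+1}(\Phi)$ (equivalently, by $\rho_1$ applied to the morphism $\rho_l(\Phi)$ cutting out $R_{k+l}$), not by $\rho_1(\Phi)$ itself, which lands in $J^1\tilde\pi$ and only sees the first prolongation. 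Relatedly, your dimension argument for the fibred-submanifold conclusion tacitly needs each $G_{k+l}$ to be a vector bundle (constant rank); this follows from involutivity together with hypothesis (1), but should be said, since \cref{Thrm:InvolutiveThrm} as stated in the paper is pointwise.
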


This theorem only requires prolonging the partial differential equation once and testing if the prolonged differential equation projects onto the original partial differential equation. Requiring $\pi_{k+1, k}$ to be surjective can be shown to be equivalent to requiring the zeroing of the so-called curvature map $\kappa : R_k \subset J^k\pi \mapsto S^lT^{\ast}M\otimes V\tilde{\pi}/\text{Im}\rho_1(\sigma(\Phi))$ \cite{GoldNonLinPDE}\cite{Gasqui}. Let $p_k \in R_k$ and $p_{k+1} \in J^{k+1}\pi$ be such that $\pi_{k+1, k}(p_{k+1}) = p_k$ the curvature map is defined as

\begin{eqnarray}
\kappa(p_k) &=& \tau\left(\rho_1(\Phi)(p_{k+1}) - j^1\Phi(p)\right),\\
\tau &:& S^lT^{\ast}M\otimes V\tilde{\pi} \mapsto S^lT^{\ast}M\otimes V\tilde{\pi}/\text{Im}\rho_1(\sigma(\Phi)).
\end{eqnarray}

Where $\tau$ is the canonical projection map onto $S^lT^{\ast}M\otimes V\tilde{\pi}/\text{Im}\rho_1(\sigma(\Phi))$.
\subsection{Control theory}
\label{subsec:ControlTheory}
\begin{Defn}
A \textbf{control system} is a 5-tuple $\Sigma = (B, M, \pi_M, U, F)$ where the 4-tuple $(B, M, \pi_M, U)$ is a fibre bundle and a smooth map $F : B \mapsto TM$ such that $\pi_{TM}\circ F = \pi_M$ where $\pi_M$ and $\pi_{TM}$ are the canonical projections of the fibre bundle and the tangent bundle respectively. 
\end{Defn} 

The base manifold $M$ models the state space of the control system and the typical fibre models the control input space. Locally the total space looks like a product space of the state and control input space however globally the topology can change drastically, allowing the model to accommodate instances where the control input space depends on the state space in a non-trivial way. By choosing fibre respecting coordinates for $B$ the usual representation of the control system as a set of differential equations can be easily recovered.Within this framework of control system representation trajectories of a control system are given by the following definition\cite{Tabauda}.     

\begin{Defn}
Let $\Sigma = (B, M, \pi_M, U, F)$ be a control system, the smooth curve $\gamma^M(t) : \mathbb{R} \mapsto M$ is called a trajectory of $\Sigma$ if there exists a curve $\gamma^{B}(t) : \mathbb{R} \mapsto B$ such that:
\begin{enumerate}
\item $\pi_M \circ \gamma^B(t) = \gamma^M(t)$
\item $\frac{d}{dt}\gamma^M(t) = F\circ\gamma^B(t)$
\end{enumerate}
\end{Defn}

Consider two control systems  $\Sigma = (B, M, \pi_M, U, F)$ and $\tilde{\Sigma} = (\tilde{B}, N, \pi_N, V, G)$, let $\Sigma_{\text{traj}}$ and $\tilde{\Sigma}_{\text{traj}}$ be the set of trajectories of the control systems respectively. $\Sigma$ and $\tilde{\Sigma}$ are said to be equivalent if and only if the sets $\Sigma_{\text{traj}}$ and $\tilde{\Sigma}_{\text{traj}}$ can be put in one-to-one correspondence\cite{ElkinDecomp}.

\begin{proposition}
The control systems $\Sigma$ and $\tilde{\Sigma}$ are equivalent if there exists a bundle isomorphism $\Phi = (\phi, \psi)$ such that the following diagram commutes.

\begin{center}
\begin{tikzcd}
B \arrow[rrr, "\psi"] \arrow[dd, "{\pi_M}"] \ar[dr, "F"]
& & & \tilde{B} \arrow[dd, "{\pi_N}"] \ar[dl, "{G}"] \\
 & TM \arrow[r, "{T\phi}"] \ar[dl, "{\pi_{TM}}"] & TN \ar[dr, "{\pi_{TN}}"] & \\
M \arrow[rrr,"\phi"] & & & N
\end{tikzcd}
\end{center}
 
\end{proposition}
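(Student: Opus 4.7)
The plan is a straightforward diagram chase: I will transport trajectories across the bundle isomorphism $\Phi = (\phi, \psi)$ and verify the two conditions in the definition of a trajectory using the four commuting triangles/squares supplied by the diagram. The key identities I will use are: $\pi_N \circ \psi = \phi \circ \pi_M$ (bundle morphism property), $T\phi \circ F = G \circ \psi$ (commutativity of the inner triangles through $TM$ and $TN$), and $\pi_{TM}\circ F=\pi_M$, $\pi_{TN}\circ G=\pi_N$ from the definition of a control system.

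First, I will define the candidate correspondence. Given a trajectory $\gamma^M : \mathbb{R}\mapsto M$ of $\Sigma$ with an accompanying curve $\gamma^B : \mathbb{R}\mapsto B$ satisfying $\pi_M\circ\gamma^B=\gamma^M$ and $\tfrac{d}{dt}\gamma^M = F\circ\gamma^B$, set
\begin{equation}
\tilde{\gamma}^N := \phi\circ\gamma^M, \qquad \tilde{\gamma}^{\tilde B} := \psi\circ\gamma^B.
\end{equation}
I will verify condition (1) by computing $\pi_N\circ\tilde{\gamma}^{\tilde B} = \pi_N\circ\psi\circ\gamma^B = \phi\circ\pi_M\circ\gamma^B = \phi\circ\gamma^M = \tilde{\gamma}^N$, and condition (2) by computing $\tfrac{d}{dt}\tilde{\gamma}^N = T\phi\circ \tfrac{d}{dt}\gamma^M = T\phi\circ F\circ\gamma^B = G\circ\psi\circ\gamma^B = G\circ\tilde{\gamma}^{\tilde B}$. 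Hence $\tilde{\gamma}^N \in \tilde{\Sigma}_{\mathrm{traj}}$, so the assignment $\gamma^M\mapsto \phi\circ\gamma^M$ maps $\Sigma_{\mathrm{traj}}$ into $\tilde{\Sigma}_{\mathrm{traj}}$.

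Next, since $\Phi$ is a bundle isomorphism, the pair $\Phi^{-1} = (\phi^{-1}, \psi^{-1})$ is also a bundle isomorphism, and the reversed diagram commutes as well (each relation in the diagram can be inverted because $\phi$ and $\psi$ are diffeomorphisms and $T\phi^{-1} = (T\phi)^{-1}$). Repeating the argument with $\Phi^{-1}$ yields a map $\tilde{\gamma}^N \mapsto \phi^{-1}\circ\tilde{\gamma}^N$ from $\tilde{\Sigma}_{\mathrm{traj}}$ into $\Sigma_{\mathrm{traj}}$. Because $\phi^{-1}\circ\phi = \mathrm{id}_M$ and $\phi\circ\phi^{-1}=\mathrm{id}_N$, these two maps are mutual inverses, establishing the one-to-one correspondence and hence the equivalence of the control systems.

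I do not expect any serious obstacle: the whole proof is essentially a chain of substitutions enforced by the commuting diagram. The only mild subtlety is that a trajectory $\gamma^M$ of $\Sigma$ comes equipped with a (non-unique) lift $\gamma^B$ to the total space, and one has to check that a single well-defined trajectory $\tilde{\gamma}^N$ of $\tilde{\Sigma}$ is produced regardless of this choice; this is immediate from the fact that $\tilde{\gamma}^N=\phi\circ\gamma^M$ depends only on $\gamma^M$, while the accompanying lift $\psi\circ\gamma^B$ is merely a witness for membership in $\tilde{\Sigma}_{\mathrm{traj}}$.
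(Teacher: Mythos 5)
Your proposal is correct and follows essentially the same diagram chase as the paper's own proof: push the trajectory and its lift forward by $(\phi,\psi)$, verify the two trajectory conditions using $\pi_N\circ\psi=\phi\circ\pi_M$ and $T\phi\circ F=G\circ\psi$, and invoke the inverse bundle isomorphism for the reverse direction. Your added remarks on mutual inverses and on the non-uniqueness of the lift are slightly more explicit than the paper, but the argument is the same.
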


\begin{proof}
Let $\gamma^M(t)$ be a trajectory of $\Sigma$ by definition there exists a curve $\gamma^B(t)$ such that $\pi_M \circ \gamma^B(t) = \gamma^M(t)$. From the fibre preserving property,
\begin{eqnarray*}
\pi_N\circ \psi \circ \gamma^B(t) &=& \phi \circ \pi_M \circ \gamma^B(t)\\
\pi_N\circ( \psi \circ \gamma^B)(t) &=& \phi \circ \gamma^M(t) 
\end{eqnarray*}
This proves the first part of the trajectory definition. For the second part differentiate the curve $\phi \circ \gamma^M(t)$.
\begin{eqnarray*}
\frac{d}{dt}(\phi \circ \gamma^M(t)) &=& T\phi\circ \frac{d}{dt}(\gamma^M(t))\\
  &=& T\phi \circ F \circ \gamma^B(t)\\
	&=& G \circ \psi \circ \gamma^B(t)\\
	&=& G \circ (\psi \circ \gamma^B)(t)
\end{eqnarray*}

Therefore $\phi\circ \gamma^M(t)$ is a trajectory of $\tilde{\Sigma}$. Since $\Phi$ is a bundle isomorphism there exists a smooth inverse bundle morphism $\Phi^{-1} : \pi_N \mapsto \pi_M$. By following the exact same steps as above it can be shown that if $\gamma^N(t)$ is a trajectory of $\tilde{\Sigma}$ then $\phi^{-1}(\gamma^N(t))$ is a trajectory of $\Sigma$.
\end{proof}

The notion of control system quotients formalises the idea of abstracting/reducing a control system, the quotient control system is a lower order approximation of the original control system where some of the information in the original system has been factored out.

\begin{Defn}\cite{Tabauda}
Consider the control systems $\Sigma_M = (B, M, \pi_M, U, F)$ and $\tilde{\Sigma} = (\tilde{B}, N, \pi_N, V, G)$ where $\text{dim}(M) > \text{dim}(N)$, $\tilde{\Sigma}$ is a \textbf{quotient control system} of $\Sigma$ if there exists a fibre bundle morphism $\Phi = (\phi, \psi) : \pi_M \mapsto \pi_N$ which satisfies the following conditions,

\begin{enumerate}
\item The maps $\phi : M \mapsto N$ and $\psi : B \mapsto \tilde{B}$ are surjective submersions i.e $\Phi$ is a bundle epimorphism.
\item $(\phi, \psi)$ maps trajectories of $\Sigma$ to trajectories of $\tilde{\Sigma}$.
\end{enumerate}

\end{Defn}

An interesting property of quotients of control systems proved in Tabauda et.al\cite{Tabauda} is the surprising fact that for any control system $\Sigma$ existence of a quotient control system is guaranteed under really mild conditions. The theorem is stated here with out proof.
\begin{theorem}\cite{Tabauda}
Consider the control system $\Sigma = (B, M, \pi_M, U, F)$ and $\phi : M \mapsto N$ a surjective submersion, if $T\phi\circ F : B \mapsto TN$ has constant rank and connected fibres then there exists,
\begin{enumerate}
\item a control system $\tilde{\Sigma} = (\tilde{B}, N, \pi_N, V, G)$,
\item a fibre preserving lift $\psi : B \mapsto \tilde{B}$ of $\phi$ such that $\tilde{\Sigma}$ is a quotient control system of $\Sigma$ with fibre bundle morphism $(\phi, \psi)$. 
\end{enumerate}
\end{theorem}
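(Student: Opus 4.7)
The plan is to construct $\tilde{\Sigma}$ by collapsing $B$ along the fibres of $\Psi := T\phi \circ F$ and taking the dynamics $G$ to be the tautological inclusion of the resulting manifold into $TN$. The two hypotheses---constant rank and connected fibres of $\Psi$---are tailored precisely so that a Godement-type quotient manifold theorem applies, after which the bundle morphism and trajectory conditions follow essentially by diagram chase.

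First, I would place on $B$ the equivalence relation $b_1 \sim b_2$ if and only if $\Psi(b_1)=\Psi(b_2)$. Constant rank of $\Psi$ makes every fibre an embedded submanifold of fixed codimension, so the graph of $\sim$ is a closed embedded submanifold of $B\times B$ whose first projection is a submersion; connectedness of the fibres is what ensures the quotient is Hausdorff. Godement's theorem then endows $\tilde{B} := B/\mathord\sim$ with a unique smooth manifold structure for which the quotient map $\psi : B \to \tilde{B}$ is a surjective submersion, and $\psi$ will serve as the required fibre-preserving lift. Factoring $\Psi$ through $\psi$ produces a smooth injective immersion $G : \tilde{B} \to TN$ with $G\circ\psi = \Psi$; I would identify $\tilde{B}$ with the embedded image $\Psi(B) \subset TN$ via $G$ and define $\pi_N := \pi_{TN} \circ G$. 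The chain $\pi_N \circ \psi = \pi_{TN}\circ\Psi = \pi_{TN}\circ T\phi\circ F = \phi\circ\pi_{TM}\circ F = \phi\circ\pi_M$ then establishes fibre preservation, while $\pi_{TN}\circ G = \pi_N$ holds by definition, giving the dynamics axiom for $\tilde{\Sigma}$. Trajectory preservation is the one-line computation $\tfrac{d}{dt}(\phi\circ\gamma^M) = T\phi\circ F\circ\gamma^B = G\circ\psi\circ\gamma^B$ for any trajectory $\gamma^M$ of $\Sigma$ with lift $\gamma^B$.

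The step I expect to be the main obstacle is upgrading $\pi_N$ from a surjective submersion to a locally trivial fibre bundle, as the definition of control system in \cref{subsec:ControlTheory} demands. Surjectivity and submersivity of $\pi_N$ follow readily from those of $\phi$, $\psi$ and $\pi_{TN}$ together with the constant rank of $\Psi$, but producing local trivialisations with a common typical fibre $V$ requires more care. I would address this by combining local trivialisations of $\pi_M$ over $\phi$-saturated neighbourhoods with the local normal form of the constant-rank map $\Psi$: over a suitable neighbourhood of any $n\in N$, the fibres of $\pi_N$ appear as quotients of the fibres of $\pi_M$ by the connected equivalence classes of $\sim$, and the connected-fibres hypothesis is exactly what makes these local quotient fibres mutually diffeomorphic and therefore defines a typical fibre $V$ globally.
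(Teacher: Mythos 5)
The paper itself gives no proof of this statement: it is imported verbatim from \cite{Tabauda} and explicitly stated without proof, so your argument can only be judged on its own terms. Your overall architecture --- quotient $B$ by the fibres of $\Psi:=T\phi\circ F$, let $G$ be the map through which $\Psi$ factors, and verify the bundle-morphism and trajectory conditions by a diagram chase --- is the construction used in the cited source, and the two diagram chases ($\pi_N\circ\psi=\phi\circ\pi_M$ and $\frac{d}{dt}(\phi\circ\gamma^M)=G\circ\psi\circ\gamma^B$) are correct as written.

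There is, however, a genuine gap at the central step, caused by assigning the two hypotheses the wrong jobs. Hausdorffness of $B/\sim$ is essentially free: the induced map $B/\sim\;\to TN$ is injective and continuous into a Hausdorff space, which separates the classes. What connectedness of the fibres actually pays for is the regularity of the relation, which you claim follows from constant rank alone. Constant rank does make each individual fibre a closed embedded submanifold of fixed codimension, but that does not make the graph of $\sim$ an embedded submanifold with submersive first projection: near a point $(b_1,b_2)$ of the graph with $b_1\neq b_2$, the graph is locally the coincidence locus of the two $k$-dimensional local images of $\Psi$ through the common value $\Psi(b_1)=\Psi(b_2)$, and these two slices of $TN$ need not agree (consider an injective constant-rank map whose image self-intersects). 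One must propagate the germ of the local image of $\Psi$ along a chain of rank-normal-form charts covering the connected fibre joining $b_1$ to $b_2$ to conclude that the two slices coincide and hence that saturations of charts are open; this is precisely where connectedness enters, it is the crux of the quotient-manifold step, and it is absent from your write-up. Two smaller points: the image $\Psi(B)$ is in general only immersed in $TN$, not embedded, so you should keep $\tilde B$ abstract rather than identify it with the image carrying the subspace topology (this costs nothing, since the definition of a control system only needs $G$ smooth with $\pi_{TN}\circ G=\pi_N$); and your closing paragraph on local triviality of $\pi_N$ correctly flags a further gap but does not close it --- connectedness of the fibres of $\Psi$ (subsets of $B$) says nothing about the fibres of $\pi_N$ (subsets of $\tilde B$) being mutually diffeomorphic, and what your argument actually delivers is only that $\pi_N$ is a surjective submersion, i.e.\ a fibred manifold rather than a locally trivial bundle.
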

If the control system $\Sigma$ has a constant rank control distribution and a none vanishing drift vector field (i.e $F$ is a constant rank map) then the above theorem guarantees the existence of control system on the manifold $N$ which is a quotient of $\Sigma$. 

With the requisite mathematical constructions having been presented it becomes easier to see that most of the reduction techniques employed in control theory are actually instances of quotients. Symmetry based reduction can be viewed as quotienting where the map $\phi$ is the projection to the quotient manifold generated by the Lie group action on the state space manifold\cite{GrizzleSymm}. The same goes for system decomposition via controlled invariance distributions, the state-space quotienting map $\phi$ corresponds to the projection map from the state space manifold to its quotient sub-manifold generated by factoring out the integral sub-manifolds of the controlled invariant distribution\cite{IsidoriDecoup}.  

\section{Problem statement}
\label{sec:ProbStat}
Using the language of geometric control theory and the characterization of quotients developed in the above sections the question of stabilizability preserving quotients can be expressed formally as follows.
\begin{Prob}
\label{Problem:GenCase}
Consider two control affine systems $\Sigma = (B, M, \pi_M, U, F)$ and $\tilde{\Sigma} = (\tilde{B}, N, \pi_N, V, G)$. Assume that 
\begin{enumerate}
\item dim(M) $>$ dim(N),
\item the control distribution defined by $F\circ \pi_M^{-1}(p) \subset T_pM$ is constant rank and smooth,
\item $\tilde{\Sigma}$ is a quotient control system of $\Sigma$ under the action of the smooth fibre bundle morphism $\Phi = (\phi, \psi)$
\item the fibre bundle morphism $\Phi$ in adapted coordinates $(x^i, u^j)$ and $(y^q, v^k)$ for $\pi_M$ and $\pi_N$ respectively has the following form
\begin{eqnarray}
\phi &:& (x^i) \mapsto (y^q = x^q)\\
\psi &:& (x^i, u^j) \mapsto (y^q = x^q, v^k = \varphi^k(x^i) + u^j\beta^k_j(x^i))
\end{eqnarray}
where $i = 1, \cdots , m$, $j = 1, \cdots , r$, $q = 1, \cdots , n$, $k = 1, \cdots , s$ and the functions $\varphi^k(x^i)$, $\beta^k_j(x^i)$ are all smooth.
\item  $\tilde{\Sigma}$ is stabilizable i.e there exists a smooth section $\alpha$ of $\pi_N$ and a positive definite function $\tilde{V} : N \mapsto \mathbb{R}$ such that the closed loop dynamics $G\circ \alpha(q), q \in N$ are asymptotically stable in the Lyapunov sense about the equilibrium point $q_0$. 
\end{enumerate}
\textbf{Under what conditions is $\Sigma$ locally stabilizable about the equilibrium point $p_0 \in \phi^{-1}(q_0)$.}
\end{Prob}

Let $\Sigma = (B, M, \pi_M, U, F)$ and $\tilde{\Sigma} = (\tilde{B}, N, \pi_N, V, G)$ be control affine systems. Assume there are coordinate charts $\mathcal{O} \subset B$, $\mathcal{Q} \subset \tilde{B}$ such that there are coordinates $\mathbf{x} = (x^1, \cdots , x^m, u^1, \cdots , u^r)$ and $\mathbf{y} = (y^1, \cdots , y^n, v^1,\cdots , v^s)$ for $\mathcal{O} \subset B$ and $\mathcal{Q} \subset \tilde{B}$ respectively such that $q_0 = \mathbf{0}$ and $p_0 = \mathbf{0}$. The control systems $\Sigma$ and $\tilde{\Sigma}$ can be written in the following form, 

\begin{eqnarray}
\label{Eqn:SigmaEqns}
\Sigma &:& 
\begin{bmatrix}
\dot{x}^1\\
\vdots\\
\dot{x}^m
\end{bmatrix} = 
\begin{bmatrix}
f^1_0(\mathbf{x})\\
\vdots\\
f^m_0(\mathbf{x})
\end{bmatrix} +
\begin{bmatrix}
f^1_1(\mathbf{x})\\
\vdots\\
f^m_1(\mathbf{x})
\end{bmatrix}u^1
+ \cdots +
\begin{bmatrix}
f^1_r(\mathbf{x})\\
\vdots\\
f^m_r(\mathbf{x})
\end{bmatrix}u^r. \\
\tilde{\Sigma} &:&
\begin{bmatrix}
\dot{y}^1\\
\vdots \\
\dot{y}^n
\end{bmatrix} = 
\begin{bmatrix}
g^1_0(\mathbf{y})\\
\vdots \\
g^n_0(\mathbf{y})
\end{bmatrix} +
\begin{bmatrix}
g^1_1(\mathbf{y})\\
\vdots \\
g^n_1(\mathbf{y})
\end{bmatrix}v^1 +
\cdots +
\begin{bmatrix}
g^1_s(\mathbf{y})\\
\vdots \\
g^n_s(\mathbf{y})
\end{bmatrix}v^s \label{Eqn:TildeSigmaEQN}
\end{eqnarray}

As stated in the problem statement (\ref{Problem:GenCase}) it is assumed that $\tilde{\Sigma}$ is stabilizable about the origin i.e there exists feedback  controls $v^1 = \alpha^1(\mathbf{y}), \cdots , v^s = \alpha^s(\mathbf{y})$ and a control Lyapunov function $\tilde{V} : N \mapsto \mathbb{R}$ such that
\begin{eqnarray}
W(\mathbf{y}) &=& \frac{\partial \tilde{V}}{\partial y^1}\left[ g^1_0(\mathbf{y}) + g^1_1(\mathbf{y})\alpha^1(\mathbf{y}) + \cdots + g^1_s(\mathbf{y})\alpha^s(\mathbf{y}) \right] + \nonumber \\
&\cdots& + \frac{\partial \tilde{V}}{\partial y^n}\left[g^n_0(\mathbf{y}) + g^1_n(\mathbf{y})\alpha^1(\mathbf{y}) + \cdots + g^n_s(\mathbf{y})\alpha^s(\mathbf{y}) \right], 
\end{eqnarray}
where $W(\mathbf{y})$ is a negative definite function. From the control system $\Sigma$'s control vector fields construct the constant rank smooth distribution $C$ defined by,
\begin{equation}
C = \text{span}\lbrace f^i_1(\mathbf{x})\frac{\partial }{\partial x^i}, \cdots , f^i_r(\mathbf{x})\frac{\partial }{\partial x^i} \rbrace.
\end{equation}
Since $C$ is constant rank and smooth it is possible to construct a complementary constant rank and smooth distribution $D$ such that $TM = C \oplus D$, the canonical projection onto the distribution $D$ will be denoted $P_D : TM \mapsto D$. The projection map $P_D$ can be represented as a vector valued one-form, let $D = \text{span}\lbrace e_1, \cdots , e_{m-r} \rbrace$ then $P_D = P_{D, i}^a(\mathbf{x})dx^i\otimes e_a$ for $a = 1, \cdots , m-r$. In matrix form $P_D$ is represented as a $(m-r)\times m$ matrix where the co-efficient $P^a_{D,i}$ corresponds to the $(a,i)$ matrix element.

The fibre bundle $\phi : M \mapsto N$ can be equipped with a connection defined below.

\begin{proposition}
\label{Prop:CoonectionComp}
The fibre bundle defined by the surjective submersion $\phi : M \mapsto N$ can be equipped with an Ehresmann connection which in coordinates $(x^1,\cdots , x^m)$ and $(y^1, \cdots , y^n)$ for $M$ and $N$ respectively has the following equivalent representations.
\begin{enumerate}
\item The canonical vertical bundle of the fibre bundle denoted $VM$ has the form $VM = \text{span}\lbrace \frac{\partial }{\partial x^{n+1}}, \cdots , \frac{\partial }{\partial x^m} \rbrace$. A connection defined as a complementary subspace to the canonical vertical bundle  will be denoted $HM$ and has the following form
\begin{eqnarray}
HM &=& \text{span}\lbrace \frac{\partial }{\partial x^1} + \Gamma^{n+1}_1(\mathbf{x})\frac{\partial }{\partial x^{n+1}} + \cdots + \Gamma^{m}_1(\mathbf{x})\frac{\partial }{\partial x^m} , \nonumber \\
 &\cdots& , \frac{\partial }{\partial x^n} + \Gamma^{n+1}_n\frac{\partial }{\partial x^{n+1}} + \cdots + \Gamma^m_n(\mathbf{x})\frac{\partial }{\partial x^m} \rbrace.
\end{eqnarray}
 \item As a $VM$-valued one form the connection has the following form,
\begin{eqnarray}
K &=& \left( dx^{n+1} - \Gamma^{n+1}_1(\mathbf{x})dx^1 - \cdots - \Gamma^{n+1}_n(\mathbf{x})dx^n\right) \otimes \frac{\partial }{\partial x^{n+1}} + \cdots \nonumber\\
&+& \left( dx^m - \Gamma^m_1(\mathbf{x})dx^1 - \cdots - \Gamma^m_n(\mathbf{x})dx^n \right) \otimes \frac{\partial }{\partial x^m}.
\end{eqnarray}
\item The connection defines the horizontal lift map as a $HM$-valued one-form  denoted $\text{Hor}_{\mathbf{x}}$,
\begin{eqnarray}
\text{Hor}_{\mathbf{x}} &=& dy^1\otimes \left( \frac{\partial }{\partial x^1} + \Gamma^{n+1}_1(\mathbf{x})\frac{\partial }{\partial x^{n+1}} + \cdots +\Gamma^m_1(\mathbf{x})\frac{\partial }{\partial x^m} \right) + \cdots \nonumber \\
&+& dy^n\otimes \left( \frac{\partial }{\partial x^n} + \Gamma^{n+1}_n(\mathbf{x})\frac{\partial }{\partial x^{n+1}} + \cdots + \Gamma^m_n(\mathbf{x})\frac{\partial }{\partial x^m} \right).
\end{eqnarray}

\end{enumerate}
\end{proposition}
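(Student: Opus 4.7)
The plan is to verify the three claims in sequence by unpacking the adapted coordinate structure imposed on $\phi$ in Problem~\ref{Problem:GenCase}. First I would compute the tangent map of $\phi$: since $\phi(x^1,\dots,x^m) = (x^1,\dots,x^n)$, its Jacobian at every point is $[I_n \mid 0]$, and hence $V_\mathbf{x}M := \ker T_\mathbf{x}\phi = \operatorname{span}\{\partial/\partial x^{n+1},\dots,\partial/\partial x^m\}$. This gives the canonical vertical sub-bundle and shows it has constant rank $m - n$, so $VM \subset TM$ is a smooth vector sub-bundle.

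Next I would establish the existence of a smooth horizontal complement $HM$ with the stated local form. By the standard complement theorem for smooth vector sub-bundles (obtained for instance by taking the orthogonal complement with respect to any Riemannian metric on $M$, whose existence follows from a partition-of-unity argument), there is a smooth rank-$n$ distribution $HM$ satisfying $TM = VM \oplus HM$. Since $\ker T_\mathbf{x}\phi = V_\mathbf{x}M$ and $\dim H_\mathbf{x}M = \dim T_{\phi(\mathbf{x})}N$, the restriction $T_\mathbf{x}\phi \colon H_\mathbf{x}M \to T_{\phi(\mathbf{x})}N$ is a linear isomorphism. I would then pick the unique local frame of $HM$ that projects under $T\phi$ onto the coordinate frame $\{\partial/\partial y^1,\dots,\partial/\partial y^n\}$; such a frame is forced to take the form $\partial/\partial x^q + \sum_{\alpha=n+1}^m \Gamma^\alpha_q(\mathbf{x})\,\partial/\partial x^\alpha$ for uniquely determined smooth coefficients $\Gamma^\alpha_q$, which is exactly the expression in claim~(1).

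Claims~(2) and~(3) then follow by reading off the two other equivalent definitions of a connection from the preliminaries. Because $K$ is the $VM$-valued one-form characterised by $K|_{VM} = \operatorname{id}_{VM}$ and $\ker K = HM$, evaluating it on the coordinate vectors $\partial/\partial x^q$ and $\partial/\partial x^\alpha$ forces $K = \sum_{\alpha=n+1}^m \bigl(dx^\alpha - \sum_{q=1}^n \Gamma^\alpha_q(\mathbf{x})\,dx^q\bigr)\otimes \partial/\partial x^\alpha$, giving claim~(2). Similarly $\operatorname{Hor}_\mathbf{x}$ is the unique right inverse of $T\phi|_{HM}$, so $\operatorname{Hor}_\mathbf{x}(\partial/\partial y^q)$ must coincide with the $q$-th horizontal frame vector constructed above; re-packaging this as an $HM$-valued one-form on $N$ gives claim~(3).

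The only step with genuine content is the construction of a globally smooth complement $HM$: the coefficients $\Gamma^\alpha_q$ are in no way canonical and pointwise choices must be assembled into a smooth family. I expect this to be the main (and essentially only) obstacle, but it is dispatched by the standard Riemannian-metric argument, after which the remainder of the proof is coordinate bookkeeping that matches the three equivalent descriptions of an Ehresmann connection recorded in Section~\ref{sec:MathsPrelim}.
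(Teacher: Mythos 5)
Your proposal is correct: the paper states this proposition without proof, relying implicitly on the general coordinate formulas for an Ehresmann connection ($H\pi$, $K$, and $\mathrm{Hor}_q$) recorded in the preliminaries, specialized to the adapted chart in which $\phi$ is the projection $(x^1,\dots,x^m)\mapsto(x^1,\dots,x^n)$ — which is exactly the coordinate bookkeeping you carry out. The one piece of genuine content you add, the existence of a smooth complement $HM$ via a Riemannian metric and the uniqueness of the frame projecting onto $\partial/\partial y^q$, is a correct and standard argument that the paper simply omits.
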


\section{Main result}
\label{sec:MainResult}
Our solution to \cref{Problem:GenCase} is constructive and involves determining a function $V : M \mapsto \mathbb{R}$ such that,
\begin{enumerate}
\item $V^{\ast}(\mathbf{x}) = \phi^{\ast}\tilde{V}(\mathbf{x}) + V(\mathbf{x})$ is a positive definite function on $M$ with $\mathbf{d}V \in \mathbf{ann}(HM), V(\mathbf{0}) = 0$.
\item There exists a stabilizing feedback $u = u(\mathbf{x})$ such that $F\circ u(\mathbf{x}) = \Sigma^{target}(\mathbf{x})$ where $\Sigma^{target}$ is a vector field with locally asymptotically stable dynamics defined by 
\begin{equation}
\Sigma^{target}(\mathbf{x}) = \text{Hor}_{\mathbf{x}} \circ G \circ \alpha \circ \phi(x) - \Delta^{\sharp}\circ \mathbf{d}\phi^{\ast}\tilde{V}(\mathbf{x}) - \Delta^{\sharp}\circ \mathbf{d}V(\mathbf{x}).
\end{equation}
Where $\Delta^{\sharp}$ is a bundle isomorphism $\Delta^{\sharp} : T^{\ast}M \mapsto TM$, $\Delta^{\sharp}(x^i, \omega_idx^i) \mapsto (x^i, \omega_{i_1}\delta^{i_1,i}\frac{\partial }{\partial x^i})$. Note $\delta^{i_1,i}$ is the Kronecker delta symbol.
\end{enumerate}

The target dynamics can be shown to be locally asymptotically stable by verifying the fact that the time derivative of the postulated candidate Lyapunov function $V^{\ast}(\mathbf{x})$ is negative definite along the trajectories of $\Sigma^{target}(\mathbf{x})$. Effectively the above requirements translates into a system of under-determined partial differential equations in $V(\mathbf{x})$ of the form
\begin{equation}
\label{EQN:PDE1}
\frac{\partial V}{\partial x^q} + \Gamma^q_p(\mathbf{x})\frac{\partial V}{\partial x^q} = 0,\: q = 1, \cdots , n.\: p = n+1, \cdots , m.
\end{equation}

\begin{equation}
\label{EQN:PDE2}
P_D\circ (\Sigma^{target}(\mathbf{x}) - f_0(\mathbf{x}) = \mathbf{0}.
\end{equation}
\Cref{EQN:PDE1} is equivalent to requiring $\mathbf{d}V \in \mathbf{ann}(HM)$ and \cref{EQN:PDE2} is equivalent to requiring $F\circ u(\mathbf{x}) = \Sigma^{target}(\mathbf{x})$. The main result of this paper contained in the theorem below gives conditions for the existence of analytic solutions to this system of partial differential equations, these integrability conditions when not met can be viewed as obstructions to our proposed method of constructing a stabilizing feedback for $\Sigma$. 

\begin{theorem}
\label{Thrm:MainTheorem}
Given the assumptions stated in the \cref{Problem:GenCase} the system of partial differential equations described by equations \cref{EQN:PDE1,EQN:PDE2} is integrable if the following conditions are met.
\begin{enumerate}
\item The control systems $\Sigma$ and $\tilde{\Sigma}$ are analytic.
\item The fibre bundle $\phi : M \mapsto N$ is equipped with a flat connection.
\item 
\begin{equation}
A_{i_1}^{a_1,a_2} = \sum_{i = 1}^m \left[P^{a_1}_{D,i}(\mathbf{x})\frac{\partial }{\partial x^i}\left(P^{a_2}_{D,i_1}(\mathbf{x})\right) - P^{a_2}_{D,i}(\mathbf{x})\frac{\partial }{\partial x^i}\left(P^{a_1}_{D,i_1}(\mathbf{x})\right) \right] = 0.
\end{equation}

\item Let $X = \text{Hor}_{\mathbf{x}}\circ G \circ \alpha (\phi(\mathbf{x})) - \mathbf{d}\phi^{\ast}(\tilde{V})(\mathbf{x}) - f_0(\mathbf{x})$, 
\begin{equation}
 B^{a_1, a_2} = \sum_{i = 1}^m \sum_{i_1 = 1}^m \left[ P^{a_2}_{D,i}(\mathbf{x})P^{a_1}_{D,i_1}(\mathbf{x}) - P^{a_1}_{D,i}(\mathbf{x})P^{a_2}_{D,i_1}(\mathbf{x}) \right]\frac{\partial }{\partial x^i}(X^{i_1})(\mathbf{x}) = 0.
\end{equation}

\end{enumerate}
\end{theorem}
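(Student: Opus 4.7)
The plan is to cast \cref{EQN:PDE1,EQN:PDE2} as a single first-order partial differential equation $R_1 \subset J^1\pi_V$, where $\pi_V : M\times\mathbb{R}\to M$ is the trivial scalar line bundle whose sections are candidate functions $V$, and then apply the formal integrability theorem of Goldschmidt recalled in \cref{subsec:GeomPDE}. Once formal integrability is established, the analyticity in assumption (1) will upgrade this to the existence of a genuine analytic local solution $V$ via the classical Cartan--K\"ahler theorem, from which the Lyapunov function $V^{\ast}$ and the stabilising feedback are reconstructed as in the discussion preceding the statement.

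First I would assemble the system into a fibre bundle morphism $\Phi : J^1\pi_V \to \tilde{\pi}$: the first $n$ components encode \eqref{EQN:PDE1}, i.e.\ the annihilation condition $\mathbf{d}V \in \mathbf{ann}(HM)$ built from the connection coefficients $\Gamma^p_q$; the remaining $m-r$ components encode \eqref{EQN:PDE2}, i.e.\ $P_D\!\circ\!\Delta^{\sharp}\mathbf{d}V = P_D\!\circ\!X$, where $X$ is the vector field defined in condition (4). Setting $R_1 = \ker\Phi$, I would compute the symbol $G_1$ in induced coordinates $(x^i, V, V_i)$; since every equation is linear in the $V_i$, the symbol is cut out by a constant-rank matrix built entirely from $\Gamma^p_q$ and $P^a_{D,i}$, and a verification using an ordering of the cobasis $(dx^1,\dots,dx^m)$ that places the horizontal and $D$-directions last should provide a quasi-regular basis. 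Involutivity of $G_1$ then follows from \cref{Thrm:InvolutiveThrm}.

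The heart of the proof is the explicit evaluation of the curvature map $\kappa$, since Goldschmidt's criterion reduces surjectivity of $\pi_{2,1}:R_2\to R_1$ to the vanishing of $\kappa$. Following the formula at the end of \cref{subsec:GeomPDE}, I would prolong $\Phi$ once, compute $\rho_1(\Phi)(p_2) - j^1\Phi(p_1)$ for any $p_2 \in J^2\pi_V$ with $\pi_{2,1}(p_2) = p_1 \in R_1$, and project modulo $\mathrm{Im}\,\rho_1(\sigma(\Phi))$. The expected main obstacle is showing that this tensor splits cleanly into three independent pieces matching the three compatibility conditions of the theorem: (a) antisymmetrising over a pair of equations from \eqref{EQN:PDE1} reproduces the coordinate expression of the curvature tensor $R$ of the connection given at the end of \cref{sec:MathsPrelim}, whose vanishing is condition (2); (b) antisymmetrising over a pair of equations from \eqref{EQN:PDE2} and eliminating the first-order jets by using the equations themselves yields exactly $A^{a_1,a_2}_{i_1}$ of condition (3); (c) the inhomogeneous term from differentiating $P_D X$ reproduces $B^{a_1,a_2}$ of condition (4). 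The mixed cross-terms between \eqref{EQN:PDE1} and \eqref{EQN:PDE2} should, under the flat-connection assumption, collapse onto linear combinations of the same three invariants so that no new obstruction appears; verifying this collapse is the principal bookkeeping difficulty.

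Granted this decomposition, conditions (2)--(4) are equivalent to $\kappa \equiv 0$, so $\pi_{2,1}:R_2\to R_1$ is surjective, $R_2$ is a fibred submanifold (by the linearity of $\Phi$ in the jet variables), and $G_1$ is involutive; Goldschmidt's theorem therefore delivers formal integrability of $R_1$. Cartan--K\"ahler applied under assumption (1) then produces an analytic local section $V$ of $\pi_V$ solving $R_1$ on a neighbourhood of $\mathbf{0}\in M$, completing the construction.
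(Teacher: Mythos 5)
Your proposal follows essentially the same route as the paper: both encode \cref{EQN:PDE1,EQN:PDE2} as a first-order equation $\mathcal{R}\subset J^1\pi$ on the trivial line bundle, establish involutivity of the symbol via a quasi-regular basis adapted to $\Delta^b(C)$ and $\mathbf{ann}(HM)$, and obtain conditions (2)--(4) by evaluating Goldschmidt's curvature map after eliminating the second-order jets using the equations themselves. The one worry you flag --- mixed cross-terms between the two blocks --- does not arise in the paper's computation because the cokernel of $\rho_1(\sigma(\Psi))$ splits as the direct sum $\wedge^2(\Delta^b(D))\oplus\wedge^2(\mathbf{ann}(VM))$, so the obstruction decouples cleanly into the $A$/$B$ conditions and the flatness condition.
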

\section{Proof}
\label{sec:Proof}
The proof  involves representing the system of partial differential equations described by equations \cref{EQN:PDE1,EQN:PDE2} in a geometric fashion and applying the theory of geometric partial differential equations to determine the integrability conditions of the equations.

\subsection{Setting up the partial differential equation}
\label{subsec:settingPDE}
To transform the partial differential equation requirements as expressed in equations \cref{EQN:PDE1,EQN:PDE2} into the language of geometric differential equations we will need the following geometric objects.
\begin{enumerate}
\item The fibre bundle $(M\times \mathbb{R}, M, \pi, \mathbb{R})$ where $\pi$ is the natural projection. A chart $\mathcal{U} \subset M$ with coordinates $x^i$ for $i = 1, \cdots , m$ induces the adapted coordinates $(x^i, V)$ in the open set $\pi^{-1}(\mathcal{U})$.
\item The first jet bundle $J^1\pi$ has the induced coordinates $(x^i, V, V_i)$.
\item A bundle morphism $\Phi_d : J^1\pi \mapsto T^{\ast}M$, $\Phi_d(x^i, V, V_i) \mapsto (x^i, V_idx^i)$.
\item A bundle isomorphism $\Delta^b : T^{\ast}M \mapsto TM$, $\Delta^b(x^i, \omega_idx^i) \mapsto (x^i, \omega_{i_1}\delta^{i_1,i}\frac{\partial }{\partial x^i})$. $\Delta^b$ is the inverse of $\Delta^{\sharp}$.
\item Recall for a connection equipped fibre bundle $\phi : M \mapsto N$, the tangent bundle splits $TM = VM\oplus HM$ where $VM$ is the canonical vertical bundle and $HM$ is the horizontal bundle defined by the connection. Since $VM$ and $HM$ are regular distributions their annihilators provide a splitting of the cotangent bundle $T^{\ast}M = \mathbf{ann}(VM)\oplus \mathbf{ann}(HM)$. 
\item A projection $P_{VM} : T^{\ast}M \mapsto \mathbf{ann}(VM)$ such that $\text{ker}(P_{VM}) = \mathbf{ann}(HM)$. $P_{VM}$ is a $(1,1)$-tensor and has the coordinate expression $P_{VM}(\mathbf{x}) = P^i_{VM, q}(\mathbf{x}) \\ \frac{\partial }{\partial x^i}\otimes dx^q$, where $\mathbf{ann}(VM) = \text{span}\lbrace dx^1, \cdots , dx^n \rbrace$ and $q = 1, \cdots , n$. In matrix form $P_{VM}$ is a $m\times n$ matrix where the co-efficient $P^i_{VM, q}$ corresponds to the $(i, q)$ entry in the matrix. 
\end{enumerate}

\begin{proposition}
\label{Prop:PDEDefn}
Consider the fibred submanifold $\mathcal{R} \subset J^1\pi$ defined as follows,
\begin{equation}
\mathcal{R} = \lbrace (x^i, V, V_i) | P_D\circ (X(x^i) - \Delta^{\sharp}\circ \Phi_d(x^i, V, V_i)) = P_{VM}\circ \Phi_d(x^i, V, V_i) = \mathbf{0} \rbrace.
\end{equation}
\begin{equation}
X \in \mathcal{X}(M) = \text{Hor}_{x^i} \circ G \circ \alpha\circ \phi(x^i) - \Delta^{\sharp}\circ \mathbf{d}\phi^{\ast}\tilde{V}(x^i) - f_0(x^i).
\end{equation}
$\mathcal{R}$ is the geometric representation of the system of partial differential equations \cref{EQN:PDE1,EQN:PDE2}.
\end{proposition}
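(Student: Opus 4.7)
The proof is essentially a coordinate verification. The plan is to show that in the adapted coordinates $(x^i, V, V_i)$ on $J^1\pi$ induced by the chart of \cref{Prop:CoonectionComp}, a section $V : \mathcal{U} \subset M \to \mathbb{R}$ of $\pi$ satisfies \cref{EQN:PDE1,EQN:PDE2} if and only if its first prolongation $j^1V : x \mapsto (x^i, V(x), \partial V/\partial x^i)$ takes values in $\mathcal{R}$. Since $\mathcal{R}$ is cut out by two independent bundle morphisms, I would verify each condition separately.

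First I would show that $P_{VM}\circ\Phi_d = 0$ corresponds to \cref{EQN:PDE1}. The map $\Phi_d$ sends $(x^i, V, V_i)$ to $V_i dx^i$, so the condition concerns the image of this one-form under $P_{VM}$. From \cref{Prop:CoonectionComp}, $\mathbf{ann}(VM) = \text{span}\{dx^1, \dots, dx^n\}$, while a straightforward calculation using the connection coefficients shows that a basis of $\mathbf{ann}(HM)$ is given by the forms $dx^p - \Gamma^p_q(\mathbf{x})\, dx^q$ for $p = n+1, \dots, m$. Decomposing $V_i dx^i$ uniquely with respect to the splitting $T^{\ast}M = \mathbf{ann}(VM)\oplus \mathbf{ann}(HM)$ yields
\[
P_{VM}(V_i dx^i) = \bigl( V_q + \Gamma^p_q(\mathbf{x})\, V_p \bigr)\, dx^q.
\]
Setting this to zero and substituting $V_i = \partial V/\partial x^i$ along the image of $j^1V$ reproduces \cref{EQN:PDE1}.

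Next I would treat $P_D\circ(X - \Delta^{\sharp}\circ\Phi_d) = 0$. Since $\Delta^{\sharp}(V_i dx^i) = V_i \partial/\partial x^i$, along the image of $j^1V$ one has $\Delta^{\sharp}\circ\Phi_d\circ j^1V = \Delta^{\sharp}\circ \mathbf{d}V$. Unpacking the given definitions of $X$ and $\Sigma^{target}$ and cancelling the common terms $\text{Hor}_{\mathbf{x}}\circ G\circ \alpha\circ \phi$ and $\Delta^{\sharp}\circ \mathbf{d}\phi^{\ast}\tilde{V}$ shows that $X - \Delta^{\sharp}\circ \mathbf{d}V = \Sigma^{target} - f_0$. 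Hence the condition $P_D\circ(X - \Delta^{\sharp}\circ\Phi_d)\circ j^1V = 0$ becomes exactly \cref{EQN:PDE2}. Combining the two equivalences, $j^1V$ lies in $\mathcal{R}$ at every point of $\mathcal{U}$ iff both PDEs hold, which is the claim.

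No step here is genuinely difficult; the only substantive calculation is the coordinate expression of $P_{VM}$, which must be read off correctly from the splitting determined by the connection of \cref{Prop:CoonectionComp}. Once that is in hand, the second condition collapses to routine substitution using the definitions of $X$, $\Phi_d$ and $\Delta^{\sharp}$, and the proposition follows.
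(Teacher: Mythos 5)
Your verification is correct and supplies exactly the unwinding of definitions that the paper relies on: the paper states this proposition without any proof, and your two coordinate computations (the splitting $P_{VM}(V_i\,dx^i) = (V_q + \Gamma^p_q V_p)\,dx^q$ recovering \cref{EQN:PDE1}, and the cancellation $X - \Delta^{\sharp}\circ\mathbf{d}V = \Sigma^{target} - f_0$ recovering \cref{EQN:PDE2}) are precisely the implicit content. In passing you have also corrected the index typo in \cref{EQN:PDE1}, whose second term should read $\Gamma^p_q(\mathbf{x})\,\partial V/\partial x^p$ summed over $p = n+1,\dots,m$.
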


For the associated fibre bundle morphism to the differential equation $\mathcal{R}$ consider the fibre bundle $(M\otimes\left( D\oplus \mathbf{ann}(VM) \right), M, \tilde{\pi}, \mathbb{R}^{m+n-r})$. The local adapted coordinates for $\tilde{\pi}$ will be denoted $(x^i, \mathcal{X}^a, \omega_q)$. Let $\Psi : J^1\pi \mapsto \tilde{\pi}$ be a fibre bundle morphism defined below.

\begin{equation}
\label{EQN:PdeCoordEqn}
\Psi(x^i, V, V_i) \mapsto (x^i, P^a_{D,i}\delta^{i,i_1}V_{i_1} - P^p_{D,i}X^i, P^i_{VM,q}V_i).
\end{equation}

The fibred submanifold $\mathcal{R}$ is the zero level set of $\Psi$.

\begin{proposition}
\label{Prop:ProlongEqn}
Consider the jet bundles $J^2\pi$ and $J^1\tilde{\pi}$ with the induced adapted local coordinates $(x^i, V, V_i, V_{[i,i_1]}), V_{[i,i_1]} = V_{[i_1,i]}$ and $(x^i, \mathcal{X}^a, \omega_q, \mathcal{X}^a_i, \omega_{q i})$ respectively. The first prolongation of the differential equation $\mathcal{R}$ is defined as the kernel of the prolonged morphism $\rho_1(\Psi) : J^2\pi \mapsto J^1\tilde{\pi}$, 

\begin{eqnarray}
\rho_1(\Psi)(x^i, V, V_i, V_{[i,i_1]}) &=& (x^i, P^a_{D,i}\delta^{i,i_1}V_{i_1} - P^a_{D,i}X^i, P^i_{VM,q}V_i, P^a_{D,i}\delta^{i,i_1}V_{[i,i_1]} \nonumber \\
&+& \frac{\partial }{\partial x^i}\left(P^a_{D, i_1}\right)\delta^{i,i_1}V_i - \frac{\partial }{\partial x^i}(P^a_{D, i_1})X^{i_1} - P^a_{D,i_1}\frac{\partial }{\partial x^i}(X^{i_1}) \nonumber \\
&,& \frac{\partial }{\partial x^{i_1}}(P^{i_1}_{VM,q})V_i + P^{i_1}_{VM, q}V_{[i, i_1]} ). 
\end{eqnarray}
\end{proposition}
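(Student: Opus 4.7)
The plan is to derive the coordinate expression for $\rho_1(\Psi)$ directly from the definition of the prolongation map given in \cref{subsec:jetbundles}. Nothing deep is at stake here: the statement is an explicit local coordinate computation, so the proof amounts to a chain-rule calculation. Recall that $\rho_1(\Psi)(j^2_p\phi) = j^1_p(\Psi \circ j^1\phi)$ for any section $\phi$ of $\pi$. Thus I need only read off the first-order jet data at $p$ of each component of $\Psi \circ j^1\phi$, treating $V$, $V_i$ and $V_{[i,i_1]}$ as the coordinate variables on $J^2\pi$.

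First I would write down the three non-derivative components. The source coordinate $x^i$ is preserved trivially. Using the definition of $\Psi$ from \cref{Prop:PDEDefn}, the $\mathcal{X}^a$ component equals $P^a_{D,i}\delta^{i,i_1}V_{i_1} - P^a_{D,i}X^i$ and the $\omega_q$ component equals $P^i_{VM,q}V_i$. These reproduce the first three slots of the stated formula without further work.

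Next I would compute the derivative components $\mathcal{X}^a_i$ and $\omega_{q\,i}$. By definition these are the partial derivatives of $\mathcal{X}^a\circ j^1\phi$ and $\omega_q\circ j^1\phi$ with respect to $x^i$, evaluated at the point whose 2-jet coordinates are $(x^i, V, V_i, V_{[i,i_1]})$. A direct application of the product rule gives
\[
\partial_{x^i}\omega_q = (\partial_{x^i}P^{i_1}_{VM,q})\,V_{i_1} + P^{i_1}_{VM,q}\,\partial_{x^i}V_{i_1},
\]
and identifying $\partial_{x^i}V_{i_1}$ with the second-jet coordinate $V_{[i,i_1]}$ reproduces the $\omega_{q\,i}$ entry of the claim. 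An analogous product-rule computation for $\mathcal{X}^a$ produces the four-term expression in the statement: differentiating $P^a_{D,i_1}\delta^{i_1,\hat{i}}V_{\hat{i}}$ yields $\partial_{x^i}(P^a_{D,i_1})\delta^{i_1,\hat{i}}V_{\hat{i}}$ plus a term proportional to $V_{[i,\hat{i}]}$, while differentiating $P^a_{D,i_1}X^{i_1}$ yields $\partial_{x^i}(P^a_{D,i_1})X^{i_1}$ plus $P^a_{D,i_1}\partial_{x^i}X^{i_1}$; assembling the four contributions reproduces the $\mathcal{X}^a_i$ entry exactly.

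There is no real obstacle beyond careful bookkeeping. The only points requiring attention are the symmetry $V_{[i,i_1]} = V_{[i_1,i]}$ of the second-jet coordinates, which is needed to identify the derivatives of the first-jet coordinates with the correct induced coordinate on $J^2\pi$, and the consistent re-use of the Kronecker delta $\delta^{i,i_1}$ and of summation indices through the product-rule expansion. Once these are tracked consistently, the stated formula for $\rho_1(\Psi)$ follows by inspection, and no appeal to the formal-integrability machinery of \cref{subsec:GeomPDE} is yet required: that machinery enters only later, when $\rho_1(\Psi)$ is used to test surjectivity of $\pi_{2,1}:R_2\to R_1$ and to extract the curvature obstructions appearing in \cref{Thrm:MainTheorem}.
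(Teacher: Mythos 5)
Your proposal is correct and is precisely the computation the paper intends: Proposition~\ref{Prop:ProlongEqn} is stated without proof, and the justification is exactly the direct application of the prolongation definition $\rho_1(\Psi)(j^2_p\phi) = j^1_p(\Psi(j^1_p\phi))$ followed by total differentiation of each component of $\Psi$, identifying $\partial_{x^i}V_{i_1}$ with the symmetric second-jet coordinate $V_{[i,i_1]}$. The only discrepancy is a harmless index typo in the paper's $\omega_{q\,i}$ entry (it prints $\frac{\partial}{\partial x^{i_1}}(P^{i_1}_{VM,q})V_i$ where your $(\partial_{x^i}P^{i_1}_{VM,q})V_{i_1}$ is the correct expression), so your bookkeeping is in fact the more accurate of the two.
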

\subsection{Symbol of partial differential equation}
\label{subsec:SymbPDE}
The symbol of the partial differential equation $\mathcal{R}$ is the vector bundle morphism $\sigma(\Psi) : V\Psi \circ \epsilon_1 : \pi^{\ast}_1(T^{\ast}M)\otimes \pi^{\ast}_{1,0}(V\pi) \mapsto V\tilde{\pi}$. Identify $T^{\ast}M\otimes V\pi$ with $T^{\ast}M$ and $V\tilde{\pi} \equiv D\oplus\mathbf{ann}(VM)$ can be identified with $\mathbb{R}^{m-r}\oplus \mathbb{R}^n$ with local coordinates $(\mathcal{X}^a, \omega_q)$. In local coordinates the inclusion map $\epsilon_1 : T^{\ast}M \mapsto V\pi^1_0$ has the form 

\begin{equation}
\epsilon_1(\omega_idx^i) \mapsto (\mathbf{0}\frac{\partial }{\partial V} + \omega_{i_1}\delta^{i_1,i}\frac{\partial }{\partial V_i}).
\end{equation}

The symbol vector bundle morphism $\sigma(\Psi)$ then becomes 

\begin{equation}
\sigma(\Psi)(\omega_idx^i) \mapsto (P^a_{D,i}\delta^{i,i_1}\omega_{i_1}, P^i_{VM,q}V_i).
\end{equation}

Recall that alternatively the symbol of $\mathcal{R}$ can be defined as a subbundle  $ G_1 \subset T^{\ast}M\otimes V\pi$ where $G_1 = \text{ker}\: \sigma(\Psi)$. Therefore $G_1$ is defined as 
\begin{equation}
G_1 = \text{ker}\: \sigma(\Psi) = \Delta^b(C)\cap \mathbf{ann}(HM), 
\end{equation}
To prolong the symbol $\sigma(\Psi)$ consider the following identifications, $S^2T^{\ast}M\otimes V\pi$ is identified with $S^2T^{\ast}M$(can be viewed as the set of symmetric $m\times m$ matrices) and $T^{\ast}M\otimes V\tilde{\pi} = T^{\ast}M\otimes (D\oplus \mathbf{ann}(VM))$ is identified with $(T^{\ast}M\otimes D)\oplus (T^{\ast}M\otimes \mathbf{ann}(VM))$. The inclusion $S^2T^{\ast}M \hookrightarrow T^{\ast}M\otimes T^{\ast}M$ in coordinates is defined as 
\begin{equation*}
\Omega_{[i_1,i_2]}dx^{i_1}\otimes dx^{i_2} \in S^2T^{\ast}M \hookrightarrow (\omega^1_{i_1}dx^{i_1})\otimes(\omega^2_{i_2}dx^{i_2}) \in T^{\ast}M\otimes T^{\ast}M|\Omega_{[i_1, i_2]} = \omega^1_{i_1}\omega^2_{i_2}.
\end{equation*}
The first prolongation of the symbol is the vector bundle morphism $\rho_1(\sigma(\Psi)) = id_{T^{\ast}M}\otimes \sigma(\Psi) : S^2T^{\ast}M \mapsto (T^{\ast}M\otimes D)\oplus (T^{\ast}M\otimes \mathbf{ann}(VM))$ defined as 
\begin{equation}
\rho_1(\sigma(\Psi))(\Omega_{[i_1,i_2]}dx^{i_1}\otimes dx^{i_2}) \mapsto (\Omega_{[i,i_1]}\delta^{i_1,i_2}P^a_{D,i_2}dx^{i}\otimes e_a, \Omega_{[i,i_1]}P^{i_1}_{VM,q}dx^i\otimes dx^q).
\end{equation}
In matrix representation the bundle morphism $\rho_1(\sigma(\Psi))$ has the form, 
\begin{equation}
\rho_1(\sigma(\Psi))(\Omega) \mapsto (P_D \Omega^T, \Omega P_{VM}), \Omega \in S^2T^{\ast}M.
\end{equation}
\begin{proposition}
\label{Prop:CokerDefn}
The kernel and co-kernel of $\rho_1(\sigma(\Psi))$ are the sub-bundles defined as
\begin{eqnarray}
G_{1+1} &=& \text{ker}\: \rho_1(\sigma(\Psi)) = S^2(\Delta^b(C)) \cap S^2 \mathbf{ann}(HM) \\
\text{co-ker}\: \rho_1(\sigma(\Psi)) &=& (T^{\ast}M \otimes D)\oplus (T^{\ast}M\otimes \mathbf{ann}(VM))/ \text{Im}\: \rho_1(\sigma(\Psi)) \nonumber \\
 &\equiv& \wedge^2(\Delta^b(D)) \oplus \wedge^2(\mathbf{ann}(VM)).
\end{eqnarray}
\end{proposition}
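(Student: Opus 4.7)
The plan is to reduce both claims to linear algebra using the two splittings of $T^{\ast}M$ that are already in hand: $T^{\ast}M = \Delta^b(C)\oplus\Delta^b(D)$, coming from $TM = C\oplus D$ via $\Delta^b$, on one side, and $T^{\ast}M = \mathbf{ann}(VM)\oplus\mathbf{ann}(HM)$, coming from the Ehresmann connection, on the other. Relative to these splittings the matrix description $\rho_1(\sigma(\Psi))(\Omega) = (P_D\Omega^T,\,\Omega P_{VM})$ becomes block-structured, so both the kernel and the image can be read off directly.

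For the kernel I would analyse the two defining equations separately. The coordinate condition $\Omega_{[i,i_1]}\delta^{i_1,i_2}P^{a}_{D,i_2} = 0$ says exactly that applying $\Delta^{\sharp}$ to the second slot of $\Omega$ lands in $\ker P_D = C$, so that slot, viewed as a 1-form, lies in $\Delta^b(C)$; the symmetry $\Omega_{[i,i_1]} = \Omega_{[i_1,i]}$ forces the other slot into $\Delta^b(C)$ as well, giving $\Omega\in S^{2}(\Delta^b(C))$. The second condition $\Omega_{[i,i_1]}P^{i_1}_{VM,q} = 0$ says that $P_{VM}$ annihilates one slot of $\Omega$; since $\ker P_{VM} = \mathbf{ann}(HM)$, the same symmetry argument gives $\Omega\in S^{2}(\mathbf{ann}(HM))$. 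Intersecting the two containments yields the stated formula for $G_{1+1}$.

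For the image I would decompose $S^{2}T^{\ast}M$ using the first splitting as $S^{2}\Delta^b(C)\oplus(\Delta^b(C)\otimes\Delta^b(D))\oplus S^{2}\Delta^b(D)$ and trace how each summand contributes to the first output $P_D\Omega^T\in T^{\ast}M\otimes D$: the piece in $S^{2}\Delta^b(C)$ vanishes; a mixed element $\eta\otimes_s\xi$ produces $\eta\otimes\Delta^{\sharp}(\xi)$ and hence fills out $\Delta^b(C)\otimes D$; and a diagonal element $\xi\otimes_s\xi'\in S^{2}\Delta^b(D)$ lands in the symmetric-under-$\Delta^{\sharp}$ part of $\Delta^b(D)\otimes D$. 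The quotient $(T^{\ast}M\otimes D)/\operatorname{Im}$ is then the antisymmetric complement, which under the identification $D\cong\Delta^b(D)$ is $\wedge^{2}(\Delta^b(D))$. An entirely parallel computation with the second splitting and the second output $\Omega P_{VM}$ produces the factor $\wedge^{2}(\mathbf{ann}(VM))$.

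The main obstacle I anticipate is showing that the two cokernel factors combine as a direct sum rather than a quotient by some coupling introduced by a single $\Omega$ driving both outputs. Concretely I must verify that the image of the combined map equals $V_1\oplus V_2$, where $V_1$ and $V_2$ are the images of the two component maps computed above. I expect to establish this by producing, for each desired pair $(a,b)\in V_1\oplus V_2$, an explicit $\Omega$ realising both components simultaneously using the complementarity of the two $T^{\ast}M$-splittings, and by cross-checking with a rank--nullity count against $\dim G_{1+1}$ to rule out any hidden linear relation between the two factors.
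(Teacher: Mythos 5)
Your strategy is the same as the paper's: split $\rho_1(\sigma(\Psi))$ into the two component maps $\rho^a:\Omega\mapsto P_D\Omega^T$ and $\rho^b:\Omega\mapsto\Omega P_{VM}$ and analyse each against the splittings $T^{\ast}M=\Delta^b(C)\oplus\Delta^b(D)$ and $T^{\ast}M=\mathbf{ann}(VM)\oplus\mathbf{ann}(HM)$. Your kernel argument is correct and matches the paper's conclusion (one slot of $\Omega$ is forced into $\ker P_D=\Delta^b(C)$, symmetry forces the other, likewise for $\mathbf{ann}(HM)=\ker P_{VM}$, and the two containments intersect to give $G_{1+1}$), and your per-component cokernel computations agree with the paper's identifications $\text{co-ker}\,\rho^a\equiv\wedge^2(\Delta^b(D))$ and $\text{co-ker}\,\rho^b\equiv\wedge^2(\mathbf{ann}(VM))$.

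The genuine gap is exactly the ``main obstacle'' you flag in your last paragraph, and the resolution you propose cannot succeed. Because a single $\Omega$ drives both outputs, the image of the combined map is in general a \emph{proper} subspace of $\text{Im}\,\rho^a\oplus\text{Im}\,\rho^b$, and the rank--nullity count you intend as a cross-check in fact exhibits the failure. Indeed
\begin{equation*}
\dim\text{Im}\,\rho_1(\sigma(\Psi))=\dim S^2T^{\ast}M-\dim\left(S^2(\Delta^b(C))\cap S^2(\mathbf{ann}(HM))\right),
\end{equation*}
while
\begin{equation*}
\dim\left(\text{Im}\,\rho^a\oplus\text{Im}\,\rho^b\right)=2\dim S^2T^{\ast}M-\dim S^2(\Delta^b(C))-\dim S^2(\mathbf{ann}(HM)),
\end{equation*}
and these coincide only when $S^2(\Delta^b(C))+S^2(\mathbf{ann}(HM))=S^2T^{\ast}M$. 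Already for $m=2$, $r=1$, $n=1$ with $\Delta^b(C)\neq\mathbf{ann}(HM)$, both summands are lines in the three-dimensional $S^2T^{\ast}M$, the combined map is injective from dimension $3$ into the four-dimensional target, so its cokernel is one-dimensional, whereas $\wedge^2(\Delta^b(D))\oplus\wedge^2(\mathbf{ann}(VM))=0$; no choice of $\Omega$ realises an arbitrary pair $(a,b)$. Be aware that the paper's own proof makes the identical unjustified jump (it writes $\text{co-ker}\,\rho_1(\sigma(\Psi))=\text{co-ker}\,\rho^a\cup\text{co-ker}\,\rho^b$ without argument), so your instinct to isolate this step is the right one; but closing it requires an additional standing hypothesis guaranteeing $S^2(\Delta^b(C))+S^2(\mathbf{ann}(HM))=S^2T^{\ast}M$, not an explicit construction of preimages.
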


\begin{proof}
Consider the splitting of the map $\rho_1(\sigma(\Psi)) = (\rho^a, \rho^b)$ with the component maps defined as, 
\begin{eqnarray}
\rho^a &:& S^2T^{\ast}M \hookrightarrow T^{\ast}M\otimes T^{\ast}M \mapsto T^{\ast}M\otimes D \equiv T^{\ast}M\otimes \Delta^{b}(D)\\
\rho^a(\Omega) &=& (id_{T^{\ast}M}\otimes \Delta^b\circ P_D \circ \Delta^{\sharp})(i(\Omega))\\
\rho^b &:& S^2T^{\ast}M \hookrightarrow T^{\ast}M\otimes T^{\ast}M \mapsto T^{\ast}M\otimes \mathbf{ann}(VM)\\
\rho^b(\Omega) &=& (id_{T^{\ast}M}\otimes P_{VM})(i(\Omega)).
\end{eqnarray}
The kernel and co-kernel of $\rho_1(\sigma(\Psi))$ can be expressed in terms of the kernels and co-kernels of the maps $\rho^a$ and $\rho^b$,
\begin{eqnarray}
\text{ker}\: \rho_1(\sigma(\Psi)) &=& \text{ker}\:\rho^a \cap \text{ker}\:\rho^b. \label{EQN:kerRho_1}\\
\text{co-ker}\: \rho_1(\sigma(\Psi)) &=& \text{co-ker}\: \rho^a \cup \text{co-ker}\: \rho^b \label{EQN:cokerRho_1}.
\end{eqnarray}
To characterise the kernel and co-kernel of $\rho^a$ consider the splitting $T^{\ast}M = \Delta^b(C)\oplus \Delta^b(D)$. This induces the following splitting of the vector bundles $S^2T^{\ast}M$, $T^{\ast}M\otimes T^{\ast}M$ and $T^{\ast}M\otimes \Delta^b(D)$.
\begin{eqnarray}
S^2T^{\ast}M &=& S^2(\Delta^b(C)) \oplus S^2(\Delta^b(D)) \oplus (\Delta^b(C) \otimes \Delta^b(D)).\\
T^{\ast}M\otimes T^{\ast}M &=& (\Delta^b(C)\otimes \Delta^b(C))\oplus (\Delta^b(C)\otimes \Delta^b(D))\oplus (\Delta^b(D)\otimes \Delta^b(C))\nonumber \\
&\oplus& (\Delta^b(D)\otimes \Delta^b(D)) \nonumber \\
&=& S^2(\Delta^b(C))\oplus \wedge^2(\Delta^b(C))\oplus (\Delta^b(C)\otimes \Delta^b(D)) \oplus (\Delta^b(D)\otimes \Delta^b(C)) \nonumber \\
&\oplus& S^2(\Delta^b(D))\oplus \wedge^2(\Delta^b(D)).\\
T^{\ast}M\otimes \Delta^b(D) &=& (\Delta^b(C)\oplus \Delta^b(D))\otimes \Delta^b(D) \nonumber \\
&=& S^2(\Delta^b(D))\oplus \wedge^2(\Delta^b(D)) \oplus (\Delta^b(C)\otimes \Delta^b(D))
\end{eqnarray}

Under this splitting $i(S^2T^{\ast}M) \subset T^{\ast}M\otimes T^{\ast}M$ has the following form 
\begin{eqnarray}
i(S^2T^{\ast}M) &=& S^2(\Delta^b(C))\oplus \mathbf{0}_{\wedge^2(\Delta^b(C))}\oplus (\Delta^b(C)\otimes \Delta^b(D)) \oplus \mathbf{0}_{(\Delta^b(D)\otimes \Delta^b(C))} \nonumber \\
&\oplus& \mathbf{0}_{\wedge^2(\Delta^b(D))}\oplus S^2(\Delta^b(D)).
\end{eqnarray}
Since $id_{T^{\ast}M}\otimes (\Delta^b \circ P_D \circ \Delta^{\sharp})$ is a full rank map the kernel and co-kernel of $\rho^a$ can be identified as 
\begin{eqnarray}
\text{ker} \rho^a &=& S^2(\Delta^b(C)) \\
\text{co-ker} \rho^a &=& \wedge^2(\Delta^b(D))
\end{eqnarray}
Following the same procedure for $\rho^b$ gives 
\begin{eqnarray}
\text{ker} \rho^b &=& S^2(\mathbf{ann}(HM)) \\
\text{co-ker} \rho^b &=& \wedge^2(\mathbf{ann}(VM))
\end{eqnarray}
Evaluating expressions (\ref{EQN:kerRho_1}) and (\ref{EQN:cokerRho_1}) using the above expressions of the kernel and co-kernel of $\rho^a$ and $\rho^b$ then proves the proposition.
\end{proof}

\begin{proposition}
The symbol $G_{1}$ of $\mathcal{R}$ is involutive.
\end{proposition}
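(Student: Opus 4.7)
The plan is to apply \cref{Thrm:InvolutiveThrm} by exhibiting, at each point $p \in M$, a quasi-regular basis of $T^{\ast}_pM$ and then verifying the Cartan dimension identity. Two preliminary simplifications are needed: a cleaner form for the prolonged symbol $G_{1+1}$, and a carefully ordered basis of $T^{\ast}_pM$.

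First, I would simplify $G_{1+1}$. From \cref{Prop:CokerDefn} we already have $G_{1+1} = S^2(\Delta^b(C)) \cap S^2(\mathbf{ann}(HM))$. I would then prove the general linear-algebra identity $S^2(A) \cap S^2(B) = S^2(A\cap B)$ for any two subspaces $A,B$ of a finite-dimensional vector space, by selecting a basis adapted to the splitting $T^{\ast}M = (A\cap B) \oplus W_A \oplus W_B \oplus W_0$ (where $W_A$ completes $A\cap B$ inside $A$, and $W_B$ completes it inside $B$) and tracking which of the ten blocks of the induced decomposition of $S^2T^{\ast}M$ lie simultaneously in $S^2A$ and $S^2B$. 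Only the block $S^2(A\cap B)$ survives both restrictions. Setting $r_1 = \dim G_1$, this yields $G_{1+1} = S^2(G_1)$ and hence $\dim G_{1+1} = r_1(r_1+1)/2$.

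Next, at $p \in M$, I would choose a basis $\{e^1,\dots,e^m\}$ of $T^{\ast}_pM$ whose first $r_1$ vectors span $G_1|_p$, and extend arbitrarily. With $\Sigma_j = \text{span}\{e^{j+1},\dots,e^m\}$, a direct inspection gives $G_1 \cap \Sigma_j = \text{span}\{e^{j+1},\dots,e^{r_1}\}$, of dimension $r_1 - j$, for $1 \leq j < r_1$, and $G_1 \cap \Sigma_j = 0$ for $j \geq r_1$. Substituting,
\begin{equation*}
\dim G_1 + \sum_{j=1}^{m-1}\dim G_{1,j} \;=\; r_1 + \sum_{j=1}^{r_1-1}(r_1-j) \;=\; r_1 + \frac{r_1(r_1-1)}{2} \;=\; \frac{r_1(r_1+1)}{2} \;=\; \dim G_{1+1},
\end{equation*}
so the chosen basis is quasi-regular and \cref{Thrm:InvolutiveThrm} then delivers involutivity of $G_1$.

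The main delicacy is the identity $S^2(A) \cap S^2(B) = S^2(A\cap B)$; without it the dimension bookkeeping collapses and $\dim G_{1+1}$ is only bounded below by $r_1(r_1+1)/2$. A more subtle pitfall worth flagging is the \emph{placement} of $G_1$ at the start of the basis rather than at the end: reversing the ordering would produce $\sum_j \dim G_{1,j} = (m-r_1-1)r_1 + r_1(r_1+1)/2$, giving only Cartan's inequality strictly and hiding involutivity.
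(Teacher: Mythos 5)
Your proposal is correct and follows essentially the same route as the paper's own proof: the paper likewise reduces to a linear-algebra lemma, places a basis of $E^{\ast}\cap F^{\ast}=G_1$ at the \emph{front} of an adapted basis of $T^{\ast}_pM$, and verifies the Cartan count $\dim G_1+\sum_j\dim G_{1,j}=r_1+\tfrac{r_1(r_1-1)}{2}=\tfrac{r_1(r_1+1)}{2}=\dim G_{1+1}$ exactly as you do. Your general identity $S^2(A)\cap S^2(B)=S^2(A\cap B)$, proved via the four-block splitting, is a cleaner formulation than the paper's lemma, which only writes out the nested case $F^{\ast}\subset E^{\ast}$ explicitly and defers the remaining configurations to ``the same procedure,'' so your version closes that small gap while the substance of the argument is identical.
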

To prove this proposition consider the following lemma.
\begin{lemma}
Let $V^{\ast}$ be a $m$-dimensional covector space, consider the subspaces $E^{\ast},F^{\ast} \subset V^{\ast}$ of dimension $r$ and $s$ respectively. If $G_1 = E^{\ast}\cup F^{\ast}$ and $G_{1+1} = S^2E^{\ast} \cap S^2F^{\ast}$, then there exists a quasi-regular basis for $V^{\ast}$.
\end{lemma}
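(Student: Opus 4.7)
The plan is to construct an explicit basis of $V^{\ast}$ and verify the quasi-regularity identity directly. Reading $\cup$ in the statement as the intersection appearing in \cref{Prop:CokerDefn} (where the symbol sits inside both $\Delta^b(C)$ and $\mathbf{ann}(HM)$), the first step is to set $t = \dim(E^{\ast}\cap F^{\ast}) = \dim G_1$ and compute $\dim G_{1+1}$. I would do this by proving that
\begin{equation}
S^2 E^{\ast}\cap S^2 F^{\ast} = S^2(E^{\ast}\cap F^{\ast}),
\end{equation}
which follows from a straightforward basis argument: choose a basis of $V^{\ast}$ whose first $t$ vectors span $E^{\ast}\cap F^{\ast}$, whose next $r-t$ vectors complete a basis of $E^{\ast}$, whose following $s-t$ vectors complete a basis of $F^{\ast}$, and extend arbitrarily. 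A symmetric tensor $\omega = \sum \omega_{ij}\,e^i\odot e^j$ lies in $S^2 E^{\ast}$ iff $\omega_{ij}=0$ whenever $i$ or $j$ exceeds $r$, and lies in $S^2 F^{\ast}$ iff $\omega_{ij}=0$ whenever $i$ or $j$ is in $\{t+1,\dots,r\}\cup\{r+s-t+1,\dots,m\}$, and the combined constraints force $i,j\in\{1,\dots,t\}$. Hence $\dim G_{1+1} = t(t+1)/2$.

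Next I would take precisely that basis (or any basis whose first $t$ elements span $G_1$) as the candidate quasi-regular basis. With $\Sigma_j = \mathrm{span}\{e^{j+1},\dots,e^m\}$ and $G_{1,j} = G_1\cap\Sigma_j$, the intersection is simply $\mathrm{span}\{e^{j+1},\dots,e^t\}$ for $j\le t-1$ and is trivial for $j\ge t$. Therefore
\begin{equation}
\sum_{j=1}^{m-1}\dim G_{1,j} = \sum_{j=1}^{t-1}(t-j) = \frac{t(t-1)}{2},
\end{equation}
and consequently
\begin{equation}
\dim G_1 + \sum_{j=1}^{m-1}\dim G_{1,j} = t + \frac{t(t-1)}{2} = \frac{t(t+1)}{2} = \dim G_{1+1},
\end{equation}
which is exactly the quasi-regularity identity.

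I expect no serious obstacle; the only point requiring a little care is the identification $S^2 E^{\ast}\cap S^2 F^{\ast} = S^2(E^{\ast}\cap F^{\ast})$, since in general the intersection of two symmetric tensor powers is strictly finer than the ``obvious'' componentwise intersection, but by using a basis adapted simultaneously to $E^{\ast}\cap F^{\ast}$, $E^{\ast}$, and $F^{\ast}$ the verification becomes routine. The remainder of the argument is combinatorial bookkeeping. Once these two dimension counts match, the chosen basis satisfies the quasi-regularity definition, completing the proof.
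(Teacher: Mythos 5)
Your proof is correct, and it follows the same basic strategy as the paper's: choose a basis of $V^{\ast}$ adapted to the two subspaces, compute $\dim G_1$, $\dim G_{1+1}$, and the dimensions of the filtration pieces $G_{1,j}=G_1\cap\Sigma_j$, and check the quasi-regularity identity by direct counting. The difference is one of completeness. The paper's explicit computation chooses a basis in which $E^{\ast}=\mathrm{span}\{v^1,\dots,v^r\}$ and $F^{\ast}=\mathrm{span}\{v^1,\dots,v^s\}$ with $r>s$, which silently forces $F^{\ast}\subseteq E^{\ast}$; it then asserts that ``the same procedure'' handles the remaining cases. Your version treats the general relative position of $E^{\ast}$ and $F^{\ast}$ in one pass by working with $t=\dim(E^{\ast}\cap F^{\ast})$ and isolating the key identity $S^2E^{\ast}\cap S^2F^{\ast}=S^2(E^{\ast}\cap F^{\ast})$, proved via a basis simultaneously adapted to $E^{\ast}\cap F^{\ast}$, $E^{\ast}$, and $F^{\ast}$; this is the step the paper leaves implicit, and you are right to flag it as the one point needing care, since componentwise reasoning in a non-adapted basis would not immediately give it. Your reading of $\cup$ as $\cap$ is also the correct interpretation: it matches both the paper's own proof and the application in Proposition \ref{Prop:CokerDefn}, where $G_1=\Delta^b(C)\cap\mathbf{ann}(HM)$. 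The final count $t+\tfrac{t(t-1)}{2}=\tfrac{t(t+1)}{2}=\dim G_{1+1}$ specializes to the paper's $s(s+1)/2$ when $t=s$, so your argument subsumes the published one.
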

\begin{proof}
Consider the case where $dim(E^{\ast}) > dim(F^{\ast})$ and $E^{\ast} \cap F^{\ast} \neq \lbrace \mathbf{0} \rbrace$, there exists a basis $V^{\ast} = \text{span}\lbrace v^1, \cdots , v^m \rbrace$ such that $E^{\ast}$ and $F^{\ast}$ have the following form.
\begin{eqnarray}
E^{\ast} &=& \text{span} \lbrace v^1, \cdots , v^s, v^{s+1}, \cdots , v^r \rbrace \\
F^{\ast} &=& \text{span} \lbrace v^1,\cdots , v^s \rbrace
\end{eqnarray}
The spaces $E^{\ast}\cap F^{\ast}$, $S^2(E^{\ast})$ and $S^2(F^{\ast})$ will then have the form,
\begin{eqnarray}
E^{\ast}\cap F^{\ast} &=& \text{span} \lbrace v^1,\cdots , v^s \rbrace \\
S^2(E^{\ast}) &=& \text{span} \lbrace v^{i_1} \otimes v^{i_2}\rbrace, i_1 \geq i_2.\; i_1, i_2 = 1, \cdots , r \\
S^2(F^{\ast}) &=& \text{span} \lbrace v^{j_1}\otimes v^{j_2} \rbrace, j_1 \geq j_2.\; j_1, j_2 = 1, \cdots , s \\
S^2(E^{\ast}) \cap S^2(F^{\ast}) &=& \text{span} \lbrace v^{j_1}\otimes v^{j_2} \rbrace, j_1 \geq j_2.\;  j_1, j_2 = 1, \cdots , s
\end{eqnarray}
Therefore $\text{dim}(E^{\ast}\cap F^{\ast}) = s$ and $\text{dim}((S^2(E^{\ast})) \cap S^2(F^{\ast})) = \frac{s(s+1)}{2}$. Let $\Sigma_k = \text{span}\lbrace v^{k+1}, \cdots , v^m \rbrace $ then we have, 
\begin{eqnarray}
\text{dim}((E^{\ast}\cap F^{\ast})\cap \Sigma_{k_1}) &=& s - k_1, \;\text{for} \;k_1 = 1, \cdots, s-1 \\
\text{dim}((E^{\ast}\cap F^{\ast})\cap \Sigma_{k_1}) &=& 0, \text{otherwise}.
\end{eqnarray}
From which we evaluate the following,
\begin{equation}
\text{dim}(E^{\ast}\cap F^{\ast}) + \sum_{k = 1}^{m-1}\text{dim}((E^{\ast}\cap F^{\ast})\cap \Sigma_{k}) = \frac{s(s+1)}{2}.
\end{equation}
Proving that the chosen basis is indeed quasi-regular. Following the same procedure for the cases where $dim(E^{\ast}) > dim(F^{\ast})$ and $dim(E^{\ast}) = dim(F^{\ast})$ proves the lemma.
\end{proof}

Applying the above lemma to the case where $V^{\ast} = T^{\ast}_{\mathbf{x}}M, E^{\ast} = \Delta^b(C(\mathbf{x})), F^{\ast} = \mathbf{ann}(H_{\mathbf{x}}M)$ and theorem (\ref{Thrm:InvolutiveThrm}) proves the proposition\cite{Guill}.
\subsection{Curvature map}
\label{subsec:CurvMap}
Let $p \in \mathcal{R} \subset J^1\pi$ with co-ordinates $p = (x^i, V, V_i)$, any point $q \in J^2\pi$ that projects onto $p$ will have coordinates of the form $q = (x^i, V, V_i, \tilde{V}_{[i_1, i_2]})$. The curvature map $\kappa : \mathcal{R} \mapsto \text{co-ker} \rho_1(\sigma(\Psi))$ is calculated as 

\begin{equation}
\kappa(p) = \tau \left( \rho_1(\Psi)(q) - j^1\Psi(p)\right).
\end{equation}

Let $(x^i, \mathcal{X}^a, \omega_q, \mathcal{X}^a_i, \omega_q^i)$ be local coordinates for $J^1\tilde{\pi}$. By definition $\rho_1(\Psi)(q) \in J^1\tilde{\pi}$ projects onto $\Psi(p)$ and therefore $\rho_1(\Psi)(q)$ will differ from $j^1\Psi(p)$ in the $\mathcal{X}^p_i$ and $\omega_q^i$ co-ordinates only. Furthermore since the fibre bundle $\tilde{\pi}^1_0$ has an affine structure modelled on $T^{\ast}M\otimes V\tilde{\pi} \equiv (T^{\ast}M\otimes D)\oplus (T^{\ast}M\otimes \mathbf{ann}(VM))$, 
$\rho_1(\Psi)(q) - j^1\Psi(p)$ can be taken to be an element of $(T^{\ast}M\otimes D)\oplus (T^{\ast}M\otimes \mathbf{ann}(VM))$. Applying equation (\ref{EQN:PdeCoordEqn}) and proposition (\ref{Prop:ProlongEqn}) gives, 

\begin{eqnarray}
\rho_1(\Psi)(q) - j^1\Psi(p) &=& \left[ \sum_{i_1 = 1}^m P^a_{D,i_1}\left(\tilde{V}_{[i,i_1]} -\frac{\partial }{\partial x^i}( V_{i_1}) \right) dx^i\otimes e_a \right.,\nonumber \\
& & \left. \sum_{i_1 = 1}^m P^{i_1}_{VM,q} \left( \tilde{V}_{[i,i_1]} - \frac{\partial }{\partial x^i}(V_{i_1}) \right)dx^i \otimes dx^q.\right]
\end{eqnarray}
 Recall that it is shown in proposition (\ref{Prop:CokerDefn}) that $\text{co-ker} \rho_1(\sigma(\Psi)) = \wedge^2(\Delta^b(D))\oplus \wedge^2(\mathbf{ann}(VM))$. Making the following identifications, 
\begin{enumerate}
\item $S^2T^{\ast}_{\mathbf{x}}M$ is the space of symmetric $m\times m$ matrices.
\item $T^{\ast}_{\mathbf{x}}M\otimes D(\mathbf{x})$ is the space of $(m-r)\times m$ matrices.
\item $T^{\ast}_{\mathbf{x}}M\otimes \mathbf{ann}(V_{\mathbf{x}}M)$ is the space of $m\times n$ matrices.
\item $\wedge^2(\Delta^b(D(\mathbf{x})))$ is the space of skew-symmetric $(m-r)\times (m-r)$ matrices.
\item $\wedge^2(\mathbf{ann}(V_{\mathbf{x}}M))$ is the space of skew-symmetric $n\times n$ matrices.
\end{enumerate} 
With these identifications the prolonged symbol map $\rho_1(\sigma(\Psi)) : S^2T^{\ast}M \mapsto (T^{\ast}M\otimes D)\oplus (T^{\ast}M\otimes \mathbf{ann}(VM))$ becomes $\rho_1(\sigma(\Psi))(\Omega) = (P_D\Omega^T, \Omega P_{VM})$. The map $\tau : (T^{\ast}M\otimes D)\oplus (T^{\ast}M\otimes \mathbf{ann}(VM)) \mapsto \wedge^2(\Delta^b(D))\oplus \wedge^2(\mathbf{ann}(VM))$ becomes $\tau(A,B) = (AP^T_D - P_DA^T, B^TP_{VM}-P^T_{VM}B)$.

Evaluating the curvature map $\kappa(p) = (G(p), H(p)), G(p) \in \wedge^2(\Delta^b(D(\mathbf{x}))), H(p) \in \wedge^2(\mathbf{ann}(V_{\mathbf{x}})M))$,
\begin{eqnarray}
G^{a_1,a_2}(p) &=& \sum_{i=1}^m \sum_{i_1 = 1}^m \left( P^{a_1}_{D,i_1}P^{a_2}_{D,i} - P^{a_1}_{D,i}P^{a_2}_{D,i_1}\right)\frac{\partial}{\partial x^i}(V_{i_1}). \label{EQN:MidIntegrability1}\\
H_{q_1, q_2}(p) &=& \sum_{i_1 = 1}^m\sum_{i=1}^m \left(P^i_{VM,q_1}P^{i_1}_{VM,q_2} - P^i_{VM, q_2}P^{i_1}_{VM, q_1} \right) \frac{\partial }{\partial x^i}(V_{i_1}) \label{EQN:MidIntegrability2}.
\end{eqnarray}
Since $p \in \mathcal{R}$ i.e $\Psi$ evaluates to zero at $p$, $V_{i_1}$ satisfies the following equations.
\begin{eqnarray}
\sum_{i_1 = 1}^m \left( P^a_{D,i_1}V_{i_1} - P^a_{D, i_1}X^{i_1} \right) &=& 0 \label{EQN:PDESYS}.\\
\sum_{i_1 = 1}^m P_{VM,q}^{i_1}V_{i_1} &=& 0 \label{EQN:PDECONN}.
\end{eqnarray}
Differentiating these equations with respect to $x^i$ allows for the elimination of $\frac{\partial }{\partial x^i}(V_{i_1})$ in equations (\ref{EQN:MidIntegrability1}) and (\ref{EQN:MidIntegrability2}). Differentiating equation (\ref{EQN:PDESYS}) with respect to $x^i$,
\begin{equation}
\sum_{i_1 = 1}^m\left( \frac{\partial }{\partial x^i}\left(P^a_{D, i_1}\right)\left( V_{i_1} - X^{i_1}\right) + P^a_{D,i_1}\frac{\partial }{\partial x^i}(V_{i_1}) - P^a_{D,i_1}\frac{\partial }{\partial x^i}(X^{i_1}) \right) = 0. 
\end{equation}
Substituting into (\ref{EQN:MidIntegrability1}) gives,
\begin{eqnarray}
G^{a_1,a_2}(p) &=&  \sum_{i=1}^m \sum_{i_1=1}^m \left( \left[ P^{a_1}_{D,i}\frac{\partial }{\partial x^i}(P^{a_2}_{D,i_1}) - P^{a_2}_{D,i}\frac{\partial }{\partial x^i}(P^{a_1}_{D, i_1}) \right](V_{i_1} - X^{i_1}) \right. \nonumber \\
&+& \left. \left[ P^{a_2}_{D,i}P^{a_1}_{D,i_1} - P^{a_1}_{D,i}P^{a_2}_{D, i_1}\right]\frac{\partial }{\partial x^i}(X^{i_1}) \right).
\end{eqnarray}
Recall for the partial differential equation $\mathcal{R}$ to be integrable the curvature map must be a zero map, $G^{a_1, a_2}(p)$ is a zero map if the following conditions are satisfied.
\begin{eqnarray}
\sum_{i=1}^m\left[ P^{a_1}_{D,i}\frac{\partial }{\partial x^i}(P^{a_2}_{D,i_1}) - P^{a_2}_{D,i}\frac{\partial }{\partial x^i}(P^{a_1}_{D,i_1})\right] &=& 0 \label{eqn:IntCond1}\\
\sum_{i=1}^m\sum_{i_1 = 1}^m \left[ P^{a_2}_{D,i}P^{a_1}_{D, i_1} - P^{a_1}_{D,i}P^{a_2}_{D,i_1} \right] \frac{\partial }{\partial x^i}(X^{i_1}) &=& 0 \label{eqn:IntCond2}.
\end{eqnarray}
Equation (\ref{eqn:IntCond1}) and (\ref{eqn:IntCond2}) correspond to conditions 3 and 4 in the theorem \ref{Thrm:MainTheorem}.

Differentiating equation (\ref{EQN:PDECONN}) with respect to $x^i$,
\begin{equation}
\sum_{i_1=1}^m \left[ \frac{\partial }{\partial x^i}(P^{i_1}_{VM, q_2})V_{i_1} + P^{i_1}_{VM, q_2}\frac{\partial }{\partial x^i}(V_{i_1}) \right] = 0.
\end{equation}
Substituting this into equation (\ref{EQN:MidIntegrability2}) gives
\begin{equation}
H_{q_1, q_2}(p) = \sum_{i_1 = 1}^m \left( \sum_{i=1}^m\left[ P^i_{VM,q_2}\frac{\partial }{\partial x^i}(P^{i_1}_{VM, q_1}) - P^i_{VM,q_1}\frac{\partial }{\partial x^i}(P^{i_1}_{VM, q_2})\right] \right) V_{i_1}. 
\end{equation}
Requiring $H_{q_1,q_2}(p)$ to be a zero map imposes the following condition
\begin{equation}
\label{EQN:IntegCondConn1}
\sum_{i=1}^m\left[ P^i_{VM,q_2}\frac{\partial }{\partial x^i}(P^{i_1}_{VM, q_1}) - P^i_{VM,q_1}\frac{\partial }{\partial x^i}(P^{i_1}_{VM, q_2})\right] = 0,
\end{equation}
From proposition (\ref{Prop:CoonectionComp}) the projection map $P_{VM} : T^{\ast}M \mapsto \mathbf{ann}(VM)$ in matrix form is, 
\begin{equation}
P_{VM} = 
\bbordermatrix{~ & 1 &        & n \cr
							 1 & 1 & \cdots & 0 \cr
							 \vdots & \vdots & \ddots & \vdots \cr
							 n & 0 & \cdots & 1 \cr
							 n+1 & \Gamma^{n+1}_1 & \cdots & \Gamma^{n+1}_n \cr
							 \vdots & \vdots & \ddots & \vdots \cr
							 m & \Gamma^{m}_1 & \cdots & \Gamma^{m}_{n} \cr}.
\end{equation}

The component $P^i_{VM,q}$ corresponds to the matrix element in the $i^{th}$-row and $q^{th}$ column. Substituting this into equation (\ref{EQN:IntegCondConn1}), the integrability condition becomes

\begin{equation}
\frac{\partial }{\partial x^{q_2}}\left(\Gamma^{l}_{q_1}\right) - \frac{\partial }{\partial x^{q_1}}\left(\Gamma^{l}_{q_2}\right) + \sum_{i_1 = n+1}^m\left[\Gamma^{l_1}_{q_2}\frac{\partial }{\partial x^{l_1}}\left( \Gamma^l_{q_1}\right) - \Gamma^{l_1}_{q_1}\frac{\partial }{\partial x^{l_1}}\left(\Gamma^l_{q_2}\right) \right] = 0
\end{equation} 

where $l,l_1 = n+1, \cdots , m$. This corresponds to the components of the curvature form of the connection being zero thus proving the flatness requirement in the theorem.
\begin{Remark}
The conditions stated in the main theorem (\ref{Thrm:MainTheorem}) are nothing but just the integrability conditions for the system of partial differential equations defined in proposition (\ref{Prop:PDEDefn}). If these conditions are met the function $V(\mathbf{x})$ can be constructed iteratively by solving the system of partial differential equations (\ref{Prop:PDEDefn}) via the Taylor series method. Having constructed $V(\mathbf{x})$ the stabilizing feedback controller can be found by  solving the following under-determined system of algebraic equations where $u_1, \cdots , u_r$ are the unknowns.

\begin{equation}
\begin{bmatrix}
f^1_1(\mathbf{x}) & \cdots & f^1_r(\mathbf{x})\\
\vdots & \ddots & \vdots \\
f^m_1(\mathbf{x}) & \cdots & f^m_r(\mathbf{x})
\end{bmatrix}
\begin{bmatrix}
u_1 \\
\vdots \\
u_r
\end{bmatrix} = 
\begin{bmatrix}
\tilde{g}^1(\mathbf{x}^1) - \frac{\partial \tilde{V}}{\partial x^1}(\mathbf{x}^1) - \frac{\partial V}{\partial x^1}(\mathbf{x}) \\
\vdots \\
\tilde{g}^n(\mathbf{x}^1) - \frac{\partial \tilde{V}}{\partial x^n}(\mathbf{x}^1) - \frac{\partial V}{\partial x^n}(\mathbf{x}) \\
\Gamma^{n+1}_1(\mathbf{x})\tilde{g}^1(\mathbf{x}^1) + \cdots + \Gamma^{n+1}_n(\mathbf{x})\tilde{g}^n(\mathbf{x}^1) - \frac{\partial V}{\partial x^{n+1}} \\
\vdots \\
\Gamma^m_1(\mathbf{x})\tilde{g}^1(\mathbf{x}^1) + \cdots + \Gamma^m_n(\mathbf{x})\tilde{g}^n - \frac{\partial V}{\partial x^m}
\end{bmatrix}
\end{equation}
\end{Remark}
\section{Conclusion}
\label{sec:Conc}
This paper has addressed the issue of stabilizability preserving quotients by proposing a method of constructing a control Lyapunov function of the system if it admits a stabilizable quotient system. More importantly we prove a theorem that identifies system structural obstructions to the proposed Lyapunov function construction method. Our approach to stabilizability preserving quotients focussed on lifting the stabilizability property from the lower dimensional quotient to the original system, a full characterization of stabilizability preserving quotients will also require a study of how quotients propagate the stabilizability property. A direction of futurework will therefore involve characterizing conditions under which the stabilizability property is propagated. 

\bibliographystyle{siamplain}
\bibliography{references}
\end{document}